\documentclass[pra,twocolumn,showpacs,amsmath,amssymb,superscriptaddress
,longbibliography
]{revtex4-1}

\usepackage[pdftex]{graphicx}
\usepackage[usenames,dvipsnames]{color}
\usepackage{epstopdf}
\usepackage[hidelinks]{hyperref}
\usepackage{tabularx}

\usepackage{mathrsfs}

\usepackage{amssymb,amsmath,amsthm}

\newtheorem{theorem}{Theorem}
\theoremstyle{definition}
\newtheorem{definition}{Definition}

\theoremstyle{remark}

\newtheorem{corollary}{Corollary}[theorem]
\newtheorem{lemma}[theorem]{Lemma}

\newcommand{\p}[1]{\mathbf{p}}
\newcommand{\q}[1]{\mathbf{q}}

\newcommand{\PP}[1]{\mathrm{(L)}}
\newcommand{\AP}[1]{\mathrm{(R)}}

\newcommand{\w}[1]{\mathbf{ w}}

\newcommand{\z}[1]{\theta}
\newcommand{\mix}[1]{\Omega}

\newcommand{\aop}[1]{{a}}
\newcommand{\cop}[1]{{c}}

\newcommand{\sm}[1]{\mathrm{SM}}
\usepackage{slashed}

\def\basiseps{{\bar\varepsilon}}

\usepackage[T1]{fontenc}
\usepackage{listings}

\lstdefinelanguage{Julia}%
  {morekeywords={abstract,break,case,catch,const,continue,do,else,elseif,%
      end,export,false,for,function,immutable,import,importall,if,in,%
      macro,module,otherwise,quote,return,switch,true,try,type,typealias,%
      using,while},%
   sensitive=true,%
   alsoother={$},%
   morecomment=[l]\#,%
   morecomment=[n]{\#=}{=\#},%
   morestring=[s]{"}{"},%
   morestring=[m]{'}{'},%
}[keywords,comments,strings]%

\lstset{%
    language         = Julia,
    basicstyle       = \ttfamily,
    keywordstyle     = \bfseries\color{blue},
    numbers=left,
    stringstyle      = \color{magenta},
    commentstyle     = \color{ForestGreen},
    showstringspaces = false,
    frame = single, 
}

\graphicspath{{FIGs/}}

\begin{document}

\title{Bundled matrix product states represent low-energy excitations faithfully}
\author{Thomas E.~Baker}
\email[Please direct correspondence to: ]{bakerte@uvic.ca}
\affiliation{Department of Physics \& Astronomy, University of Victoria, Victoria, British Columbia V8P 5C2, Canada}
\affiliation{Department of Chemistry, University of Victoria, Victoria, British Columbia V8P 5C2, Canada}
\affiliation{Centre for Advanced Materials \& Related Technologies, University of Victoria, Victoria, British Columbia V8P 5C2, Canada}

\author{Negar Seif}
\affiliation{Department of Physics \& Astronomy, University of Victoria, Victoria, British Columbia V8P 5C2, Canada}

\date{\today}

\begin{abstract}

We consider a set of density matrices. All of which are written in the same orbital basis, but the orbital basis size is less than the total Hilbert space size. We ask how each density matrix is related to each of the others by establishing a norm between density matrices based on the truncation error in a partial trace for a small set of orbitals. We find that states with large energy differences must have large differences in their density matrices. Small energy differences are divided into two groups, one where two density matrices have small differences and another where they are very different, as is the case of symmetry. We extend these ideas to a bundle of matrix product states and show that bond dimension of the wavefunction ansatz for two states with large energy differences are larger. Meanwhile, low energy differences can have nearly the same bond dimensions for similar states.

\end{abstract}

\maketitle

\section{Introduction}

Density matrices represent one of the core objects in quantum mechanics. They store a wealth of information about the system and can be useful for solving problems. It is well established that the trace of the density matrix multiplied onto any operator gives the expectation value of the operator for a given state that the density matrix represents \cite{townsend2000modern,reif2009fundamentals}.

When diagonalized with an eigenvalue decomposition, density matrices are decomposed into a diagonal matrix that contains the orbital occupations of the natural orbitals. The natural orbitals themselves are the eigenvectors of the density matrix. It was originally pointed out by L\"owdin \cite{lowdin1955quantum,lowdin1955quantumb,lowdin1956natural} that natural orbitals were a rapidly converging basis set ({\it i.e.} the lowest eigenvalue converges faster than other choices of a basis with increasing numbers of natural orbitals). Because the solution of natural orbitals requires a ground state wavefunction, it is often computationally costly to obtain them before a computation. So, another basis set is often used.

However, describing the density matrix with a number of states equal to the total Hilbert space size is computationally costly. Reducing the number of degrees of freedom while maintaining accuracy on the result is the main challenge of computational chemistry and solutions of quantum problems on the classical computer in general. This is the foundational idea behind renormalization.

What is considered less often is how natural orbitals for one density matrix describe well or do not describe at all another density matrix. Our goal in this paper is to determine how density matrices when summed over some of the basis functions in a given basis can describe a system. We consider here the idea of bundling together different density matrices. We further consider how accurate those density matrices are if a common basis is used to write both density matrices in full. For example, if a set of $m$ orbtials that have the highest occupation for one density matrix are used to express another density matrix, how accurate can the second density matrix be and what is the best way to characterize it?

The fundamental question that is being asked here is how best can one relate what we will call a bundled set of density matrices, defined as follows.

\begin{definition}[Bundle of density matrices]\label{bundledef}
A bundle of density matrices is defined as a set of density matrices that are all written in the same basis.
\end{definition}

This is not the same as an ensemble of states contained in the density matrix.

The fundamental quantity that we want to investigate is whether a notion of closeness ({\it i.e.}, a norm) can be defined for the independent density matrices. The result used will be to establish a relationship between the truncation error and the metric distance between two density matrices. This will also be related to the energy difference between two states. The argumentation applies to any local Hamiltonian, which is reasonable for physical systems. We then extend the outcomes of those answers to matrix product states (MPS) to understand how the bond dimension of a bundle of MPSs will behave. This will explain why the bond dimension of the bundled MPS was not explosively large when an algorithm was formulated to solve for excitations in a quantum model \cite{baker2024direct,nakatani2014linear,huang2018generalized}.

The analysis tools used to formulate the truncation of the bundled MPS apply in principle to bundles of any type of density matrices so long as the eigenstates satisfy the area law of entanglement.  However, we choose to focus on bundles of eigenstates because they apply most readily in entanglement renormalization algorithms. In the following, we use theorems only when they are most relevant to the main thesis statement of the paper. We use definitions throughout to clearly define core concepts and keywords. Corollaries are used to communicate small extensions of the core theorems. Lemmas are used when heavy reliance on results outside of the paper are required to prove the statement and also when those statements are required for proofs later on. 

\section{Background on density matrices}

The class of problems that we wish to solve are based on the definition of a Hamiltonian that is a self-adjoint operator \cite{von1955mathematical}, $H$, composed of complex coefficients ($H\in \mathbb{C}^{M\times M}$ for an $M$ sized Hilbert space). The eigensolutions, $\psi$, of this operator have the relationship $H\psi=E\psi$ for eigenvalues (energies) $E\in\mathbb{R}$ and eigenvectors (wavefunctions) $\psi\in\mathcal{L}^2$ (square integrable) and contain complex numbers \cite{debnath2005introduction}. The Hamiltonian can be represented by a number of site indices $i,j,k,\ell,\ldots$. For many-body Hamiltonians, there is a quartic term (4-indices required) that appears to account for the electron-electron interaction, although the results we derive here will apply for any interaction. We choose to start from many-body Hamiltonians since this will recover a wide class a non-relativistic phenomena that we are interested in.

There is no consideration for divergences ({\it i.e.}, points where the evaluation of a quantity is infinite) in any terms as this analysis is solely concerned with models implemented to a finite numerical precision. Thus, all singularities in any interactions are regularized by finite difference approximations.

\subsection{Density matrices}

A density matrix can have several connotations. We explicitly define several that are useful here. The type of density matrix that will be used here is the one-body reduced density matrix, although we describe the more general case in many places.

\begin{definition}[Density matrix]
For a given Hamiltonian $H$, the full density matrix of a system is defined by
\begin{equation}\label{densitymatrix}
\rho=\sum_k\eta_k|\psi_k\rangle\langle\psi_k|
\end{equation}
for the $k$th excitation of the system and some occupation used throughout as $0\leq\eta_k\leq1$. When projected onto a real space lattice (or other basis) by through a resolution of the identity, the density matrix then assumes the form,
\begin{align}\label{realrho}
\rho=&\sum_{ ijk\mathcal{A}\mathcal{B}}\eta_k|i\mathcal{A}\rangle\langle i\mathcal{A}|\psi_k\rangle\langle\psi_k|j\mathcal{B}\rangle\langle j\mathcal{B}|\\
=&\sum_{ij\mathcal{A}\mathcal{B}}\rho_{ij\mathcal{A}\mathcal{B}}|i\mathcal{A}\rangle\langle j\mathcal{B}|
\end{align}
after defining $\rho_{ij\mathcal{A}\mathcal{B}}=\sum_k\eta_k\langle i\mathcal{A}|\psi_k\rangle\langle\psi_k|j\mathcal{B}\rangle$. When $\mathcal{A}$ and $\mathcal{B}$ are a null set, when either variable contains no indices, the density matrix is represented as a one-body reduced matrix ({\it i.e.}, only requiring the indices $i$ and $j$),
\begin{equation}
\rho=\sum_{ij}\rho_{ij}|i\rangle\langle j|.
\end{equation}
Had more indices been kept, then a higher order density matrix would be represented ({\it i.e.}, $i,j,k,\ell$ for the two-body reduced density matrix). In second quantization, one can simply compute $\rho_{ij}=\langle c^\dagger_{i\sigma} c_{j\sigma}\rangle$ (see Apx.~\ref{densitymatrixequivalence}) in a fermion model with spin $\sigma$. The density matrix defined in Def.~\ref{densitymatrix} is generally representing a {\it mixed density matrix}. 
\end{definition}

There are a few types of density matrices that form special types of the above definition. 

\begin{definition}[Pure state density matrix]
In the special case where  $\rho^2=\rho$ or that $\mathrm{Tr}(\rho^2)=1$ the density matrix is called a pure density matrix. Pure states can be represented as $\rho=|\psi\rangle\langle\psi|$ for some state $\psi$.
\end{definition}

In quantum chemistry, there is a different normalization convention. The trace of the density matrix is not always one, but instead is the number of particles, $N_e$, with a particular spin. We find it occasionally useful to refer to an ensemble density matrix that we define as the following. 

\begin{definition}[Ensemble density matrix]
Any density matrix whose trace is not one ($\eta_k$ can assume any positive value).
\end{definition}

Note that when the normalization of the ensemble density matrix has a trace that is not 1 that one can have $\mathrm{Tr}(\rho)\neq1$ and $\rho^2=\rho$. For example, three electrons of the same spin in a pure state can be represented in an ensemble density matrix and have $\mathrm{Tr}(\rho)=3$.

In many contexts, the mixed density matrix and the ensemble density matrix are used synonymously. A density matrix in the mixed representation can be thought of as a linear superposition of other states, so the concepts are the same. 

We define explicitly here the ensemble density matrix not only to distinguish between normalization factors in quantum information and quantum chemistry, but we also take this opportunity to highlight that the bundled density matrix from Def.~\ref{bundledef} is not an ensemble density matrix. Specifically, the bundle of density matrices can contain ensemble density matrices, although the density matrices in the bundle can be of any type. 

\begin{definition}[Natural orbitals]\label{naturalorbitalsdef}
The diagonalization of a density matrix yields a set of eigenvectors known as the {\it natural orbitals}. For the one-body reduced density matrix, these orbitals represent the density matrix as
\begin{equation}\label{densmat_NO}
\rho=\sum_k\varepsilon_k|\Phi_k\rangle\langle\Phi_k|
\end{equation}
with an eigenvalue $\varepsilon_k$ sometimes called an {\it occupational weight}.
\end{definition}

At first glance, Eq.~\eqref{densitymatrix} and Eq.~\eqref{densmat_NO} appear to be identical. However, this is not the case. The point of the definition of the natural orbitals is that they are expressed in a single-particle basis (only one coordinate $\mathbf{r}$ as used more extensively in Apx.~\ref{densitymatrixequivalence}). Meanwhile, the eigenvectors used in Eq.~\eqref{densitymatrix} have one coordinate $\mathbf{r}\in\mathbb{R}^3$ for each electron in the system.  

We will note that certain renormalization schemes can generate a more efficient basis set than natural orbitals \cite{bakerPRB18}. However, reducing the problem down to few enough orbitals that a polynomial time solver could be used would imply that the determination of that transformation is not discoverable in polynomial time since the general problem is known to be hard \cite{schuch2009computational}. Thus, no universally efficient procedure should be expected.

\begin{definition}[Expectation values]\label{defexpvalues}
For a given Hamiltonian operator $H$, the expectation value $E$ (energy) is
\begin{equation}\label{trace_expect}
E=\mathrm{Tr}(\rho H)\equiv\sum_k\langle k|\rho H|k\rangle.
\end{equation}
Replacing $H$ by any other operator $W$ gives the expectation value $\langle W\rangle$. The index $k$ is taken over any set of orbitals that is orthogonal and complete.
\end{definition}

Throughout, we will only consider orthogonal basis states, which applies equally to $k$ above, and the final result requires a local operator.

Typically what is done in practical computations is to take a truncated trace from the natural orbitals with the highest to the lowest weight. The result answer converges very quickly which can be seen from the using these functions in practice \cite{davidson1972properties,rothenberg1966natural,helbig2010physical,lathiotakis2010size,bakerPRB18}.

For completeness, we define an excitation in the system using the above concepts.

\begin{definition}[Excited states]
Given a set of excitations spanning an interval of energy, the next excitation can be defined as follows. To find an excitation at eigenenergy $\bar E$, take the set of all density matrices composed of eigenstates with $E<\bar E$. Then find the states that are orthogonal to those states. The minimum energy will be the excitation up to degeneracy.
\end{definition}

Two different excitations will have a different density matrices.

\subsection{Truncated density matrices}

So far, we have discussed density matrices where the basis states used to describe the density matrix spans the entire Hilbert space. Let us now define a truncated density matrix where small occupations are set to zero.
\begin{definition}[Truncated density matrix]
Returning to Def.~\eqref{naturalorbitalsdef}, the definition can be modified to define a {\it truncated density matrix} if the sum over $k$ in Eq.~\eqref{densmat_NO} is restricted to a value $m$ less than the Hilbert space size, $M$.  We denote the truncated trace as
\begin{equation}
\mathrm{Tr}_m^{(\gamma)}(\rho^{(\alpha)})=\sum_{i=1}^m\langle\Phi_i^{(\gamma)}|\rho^{(\alpha)}|\Phi_i^{(\gamma)}\rangle
\end{equation}
for orbitals from a set $\gamma$ on the $\alpha$ excitation.
\end{definition}

We use the term `truncated trace' because `partial trace' is usually associated with tracing over lattice sites and producing the partial density matrix.

There is an immediate consequence that the density matrix is now truncated, leading naturally to the definition of the truncation error.

\begin{definition}[Truncation error]\label{truncationerror}
The density matrix may be truncated to dimension $m$, known as the {\it bond dimension}. The difference from the true value of the trace of $\rho$ is known as the {\it truncation error}, $\delta$, which provides an estimate of the precision of the resulting expectation values from Def.~\ref{defexpvalues}. The full definition is then
\begin{equation}
\mathrm{Tr}(\rho^{(\alpha)})-\mathrm{Tr}_m^{(\alpha)}(\rho^{(\alpha)})=\sum_{i=1}^M\rho_{ii}^{(\alpha)}-\sum_{i=1}^m\rho_{ii}^{(\alpha)}\equiv\delta^{(\alpha)}_{m;\gamma}
\end{equation}
where $\alpha$ denotes a state that was used to construct $\rho$ and $\gamma$ is the basis over which the truncated trace was evaluated.
\end{definition}

\section{Relationship between two density matrices}

The energy difference between excitations can be defined using the expectation values as the following.

\begin{definition}[Energy difference]\label{Ediff}

For a given Hamiltonian $H$, the energy difference between two states $\alpha$ and $\beta$ is
\begin{equation}\label{energydiff}
\Delta E_{\alpha\beta} = \mathrm{Tr}\left((\rho^{(\alpha)}-\rho^{(\beta)})H\right)
\end{equation}
where $\Delta E_{\alpha\beta}=E_\alpha-E_\beta$. 
\end{definition}

Two excitations $\alpha$ and $\beta$ each have natural orbitals. These do not need to be the same set, nor orthogonal to each other. However, we restrict ourselves to orthogonal sets of natural orbitals so that they are related by a unitary transformation in every case.

The trace of any density matrix is invariant to the basis over which the trace is performed,
\begin{equation}\label{cyclictrace}
\mathrm{Tr}(\rho)=\mathrm{Tr}(U\rho U^\dagger)
\end{equation}
this is often known as the cyclic property of the trace. However, a unitary cannot be assumed at the start, and we will see that the traces between the two density matrices must be satisfied to ensure this is true.

\begin{lemma}[Relation between natural orbitals of different excitations]\label{NOrelation}
Natural orbitals of two different density matrices (assumed to be written in the same orbital basis with different occupational weights) are related by a unitary transformation if the trace is the same.
\end{lemma}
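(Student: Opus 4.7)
My plan is to invoke the spectral theorem on each density matrix and then construct the relating unitary explicitly from the two resulting eigenbases. Since $\rho^{(\alpha)}$ and $\rho^{(\beta)}$ are each self-adjoint and positive semi-definite on the shared orbital space, the spectral theorem guarantees an orthonormal eigenbasis for each, which by Def.~\ref{naturalorbitalsdef} are the natural orbitals $\{|\Phi_k^{(\alpha)}\rangle\}$ and $\{|\Phi_k^{(\beta)}\rangle\}$. Because both sets live in the same finite-dimensional Hilbert space determined by the common orbital basis, the candidate operator
\begin{equation}
U = \sum_k |\Phi_k^{(\beta)}\rangle\langle\Phi_k^{(\alpha)}|
\end{equation}
takes one set onto the other via $U|\Phi_k^{(\alpha)}\rangle = |\Phi_k^{(\beta)}\rangle$. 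A short calculation using the orthonormality of each set yields $UU^\dagger = U^\dagger U = \mathbb{I}$, so $U$ is unitary and the desired relation between the two natural orbital sets is established.

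Next, I would articulate the role of the trace hypothesis, which is what aligns $U$ with the cyclic property recorded in Eq.~(\ref{cyclictrace}). The trace of a density matrix equals the sum of its occupational weights, so equating traces imposes $\sum_k \varepsilon_k^{(\alpha)} = \sum_k \varepsilon_k^{(\beta)}$. Without this condition one could still produce an abstract unitary between orthonormal eigenbases, but that unitary would not respect the trace-based equivalence the rest of the paper builds a norm from. Equal traces are what make $U$ the correct representative for the later arguments on truncation error and metric distance (the ones leading up to Def.~\ref{Ediff}), as opposed to a merely geometric rotation of eigenvectors that carries no physical content.

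The main obstacle I anticipate is conceptual rather than computational: it is justifying why the trace condition is the right hypothesis at all, given that any two orthonormal bases of the same finite-dimensional space are trivially related by some unitary. My plan is to address this head-on by separating the geometric content (existence of $U$ from the spectral theorem) from the physical content (that $U$ is compatible with the cyclic trace property that will be needed to compare $\mathrm{Tr}(\rho^{(\alpha)} H)$ and $\mathrm{Tr}(\rho^{(\beta)} H)$ in subsequent manipulations). The explicit bra-ket computation verifying unitarity of $U$ is elementary and should present no technical difficulty.
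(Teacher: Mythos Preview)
Your proof is correct and takes a cleaner, more explicit route than the paper. The paper argues via trace invariance under unitary conjugation: writing $\mathrm{Tr}(\rho^{(\alpha)}) - \mathrm{Tr}(\rho^{(\beta)}) = \Delta N_e$, it asserts that equal particle number ``implies'' the natural orbitals are related by a unitary, and that unequal particle number would force a merely orthogonal (non-unitary) map, to be repaired by renormalization. That reasoning is somewhat circular, since trace preservation is a \emph{consequence} of unitarity rather than a sufficient condition for it. Your route---invoke the spectral theorem on each Hermitian matrix to obtain two orthonormal eigenbases of the common orbital space, then write $U = \sum_k |\Phi_k^{(\beta)}\rangle\langle\Phi_k^{(\alpha)}|$ and verify $UU^\dagger = U^\dagger U = \mathbb{I}$ directly---is the standard linear-algebra argument and makes transparent that the unitary exists regardless of the trace hypothesis. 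Your frank acknowledgment that the trace condition is not mathematically required for existence of $U$, and serves instead to align $U$ with the physical comparisons developed afterward, is exactly the right diagnosis; the paper effectively concedes the same point in the remark following its proof about including unoccupied natural orbitals so that both sets are complete. What the paper's framing buys is an explicit gesture toward the unequal-trace (different particle number) case, which it treats by a separate normalization step; your construction could absorb that case too, but does not engage with it.
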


\begin{proof}
Two density matrices for two states $\alpha$ and $\beta$ satisfy
\begin{equation}
\mathrm{Tr}(\rho^{(\alpha)})-\mathrm{Tr}(\rho^{(\beta)})=\Delta N_e
\end{equation}
and by the cyclic property of the trace from Eq.~\eqref{cyclictrace}, we can transform $\rho^{(\alpha)}\rightarrow U\rho^{(\alpha)}U^\dagger$ without changing the trace. If the number of particles between the two density matrices is the same, then this implies that the natural orbitals between the two states are related by a unitary matrix. For two states with different particle number, we would need to use an orthogonal transformation $U^\dagger\rightarrow O^{-1}$ but the change in magnitude of the density matrix by the orthogonal transformation should be considered that we always consider natural orbitals with a normalized amplitude. Thus, one can enforce a normalization to again show that the natural orbitals are related by a unitary transformation.

\end{proof}

So long as the natural orbital sets for $\alpha$ and $\beta$ are both complete in the space spanned by two density matrices (including unoccupied natural orbitals), then this proof holds. We restrict our consideration to density matrices with the same total trace for simplicity, but all results can be extended to the case where an orthogonal transformation is required.

\subsection{Local systems}

The notion of locality can also apply to the eigensolutions and general operators \cite{von1955mathematical,cervera2017IPAM}. In fact, is a central idea in quantum physics for all types of physically relevant ground-states.

\subsubsection{Local correlations}

\begin{definition}[Locality]\label{localcorrelation}
The area law is a statement of correlations for extremal eigenvalues of the full spectrum. The two conditions for correlation functions (operator $O$) depending on whether there is a gap in the eigenvalue spectrum (gapped) or not (gapless) \cite{bakerCJP21,*baker2019m}:
\begin{equation}\label{arealaw_corr}
\langle i|O|j\rangle\sim\begin{cases}
\exp(-|i-j|/\xi)&\quad\mathrm{[gapped]}\\
1/|i-j|^\gamma&\quad\mathrm{[gapless]}
\end{cases}
\end{equation}
for two arbitrary, real exponents $\xi$ and $\gamma$. This condition holds up to a sign of the function. It is without loss of generality to more than two sites that this same definition still holds in the following.
\end{definition}

For a proof of the correlation dichotomy, in some contexts known as Kohn's near-sightedness principle, we refer the interested reader to Refs.~\onlinecite{hastings2004locality,kohn1996density,prodan2005nearsightedness}.

\begin{lemma}\label{localNOs}
The natural orbitals are local, defined as a non-zero on a compact subset of $\mathbf{r}\in\mathbb{R}^3$. 
\end{lemma}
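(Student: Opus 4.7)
The plan is to transfer the locality of correlation functions guaranteed by Def.~\ref{localcorrelation} from the density matrix itself to its eigenvectors. In second quantization, the elements of the one-body reduced density matrix are precisely the two-point correlators $\rho_{ij}=\langle c^\dagger_{i\sigma}c_{j\sigma}\rangle$, so they inherit either exponential decay with rate $\xi^{-1}$ in the gapped case or power-law decay $|i-j|^{-\gamma}$ in the gapless case directly from Eq.~\eqref{arealaw_corr}. Rewritten in position space as a kernel $\rho(\mathbf{r},\mathbf{r}')$, the density matrix is therefore essentially band-limited: its magnitude falls off rapidly away from the diagonal $\mathbf{r}=\mathbf{r}'$, and combined with the finite-precision regularization stated at the start of Sec.~II, the tails become numerically indistinguishable from zero beyond some characteristic radius.

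The next step is to show that eigenvectors of a kernel with this off-diagonal decay must themselves be spatially localized. Writing the eigenvalue equation $\rho\Phi_k=\varepsilon_k\Phi_k$ in integral form,
\begin{equation}
\Phi_k(\mathbf{r})=\frac{1}{\varepsilon_k}\int\rho(\mathbf{r},\mathbf{r}')\Phi_k(\mathbf{r}')\,\mathrm{d}\mathbf{r}',
\end{equation}
the value of $\Phi_k$ at a point $\mathbf{r}$ is bounded by the tail of $\rho(\mathbf{r},\cdot)$ away from the region where $\Phi_k$ has appreciable amplitude. Iterating this estimate, or equivalently invoking a Combes--Thomas style resolvent bound, would give an inherited decay envelope on $\Phi_k$ matching (up to polynomial prefactors) the decay rate of $\rho$ itself. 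Convolving this envelope with the numerical precision floor then yields compact effective support for each natural orbital in $\mathbb{R}^3$.

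The main obstacle will be controlling this transfer of decay when $\varepsilon_k$ is small or when the system is gapless. A small occupational weight amplifies the tail of $\rho$ through the factor $\varepsilon_k^{-1}$ and can spoil the naive bound, so the argument should be restricted to the natural orbitals with non-negligible weight---which are precisely those retained by the truncated trace of Def.~\ref{truncationerror} and which are the only ones relevant for the bundled-MPS discussion later. In the gapless case, standard Combes--Thomas estimates require exponential rather than algebraic decay, so one must either appeal to the generalizations valid for power-law kernels or fall back on the finite-precision convention of the paper, under which polynomial tails are still effectively compactly supported once they drop below the numerical floor. Either route closes the argument without strengthening the hypotheses already stated in Def.~\ref{localcorrelation}.
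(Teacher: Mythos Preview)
Your argument is sound and, in fact, considerably more careful than what the paper does. The paper's proof is three sentences: the matrix elements $\rho_{ij}=\langle c^\dagger_{i\sigma}c_{j\sigma}\rangle$ are local correlators by Def.~\ref{localcorrelation}; diagonalization produces ``a linear combination of these elements''; and a linear combination of local functions is local. That is the entire argument---no eigenvalue equation, no Combes--Thomas, no discussion of small $\varepsilon_k$ or the gapless case.

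The substantive difference is that you actually justify the transfer of decay from kernel to eigenvector, whereas the paper simply asserts it. Your integral form of the eigenvalue equation and the resolvent-style bound make explicit \emph{why} off-diagonal decay of $\rho(\mathbf{r},\mathbf{r}')$ constrains the support of $\Phi_k$, and your caveats about small occupational weights and algebraic tails identify real limitations that the paper's one-liner glosses over. What the paper's version buys is brevity and alignment with the level of rigor used elsewhere in the manuscript (which leans on finite numerical precision throughout); what your version buys is an argument that would survive outside that convention and that isolates exactly which natural orbitals---those with non-negligible $\varepsilon_k$---the locality claim is defensible for. Both reach the same conclusion, but yours is the one that would stand up as an independent lemma.
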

\begin{proof}
Since they are derived from $\langle\hat c^\dagger_{i\sigma}\hat c_{j\sigma}\rangle$ for each element of the density matrix. Upon diagonalization, a linear combination of these elements will be the result. Since a linear combination of local correlation functions are local themselves, the natural orbitals are also local.
\end{proof}

\subsubsection{Local Hamiltonians}

The specific property of the system that we want to study here is for local Hamiltonians ({\it i.e.}~those with finite extent). The basic assumption at a coarse level constrains the long-range behavior of the Hamiltonian is contained in the following definition.
\begin{definition}[Local Hamiltonian]\label{localH}
A local Hamiltonian satisfies the following two properties in the thermodynamic limit.
\begin{equation}
\underset{|{i\mathcal{A}}-{j\mathcal{B}}|\rightarrow\infty}\lim\langle i\mathcal{A}|H|j\mathcal{B}\rangle\rightarrow0
\end{equation}
where $i$ and the set of coordinates $\mathcal{A}$. 
\end{definition}

\begin{definition}[Ultra-local Hamiltonian]\label{ultralocalH}
If the interactions are local, then in one limit 
\begin{equation}
\underset{|{i\mathcal{A}}-{j\mathcal{B}}|\rightarrow0}\lim\langle i\mathcal{A}|H|j\mathcal{B}\rangle\rightarrow C_{{i\mathcal{A}}}
\end{equation}
where $C_{{i\mathcal{A}}}\in\mathbb{C}$ is finite and real. This definition holds whether the sites $i\mathcal{A}$ and $j\mathcal{B}$ represent single sites or clusters of sites, but it represents the ultra local limit where the Hamiltonian appears truly local.
\end{definition}

\subsection{Relationship between natural orbital states}

It is useful to explore the relationship between the natural orbitals of two different states and how the unitary that connects them can appear. There are two broad categories that the unitary can take and it is worth explicitly defining each.  After defining the two cases, we remark on some physical cases where these can be found.

\begin{theorem}[Low truncation error implies a unitary transformation over relevant states between natural orbitals of two excitations]\label{UisIdentity}
A unitary matrix $U$ relating the natural orbitals of two excitations is nearly the identity except for a sub-block over $m$ states if the two states $\alpha$ and $\beta$ both have small truncation error in the $\mathrm m$ most important orbitals to $\rho^{(\alpha)}$.
\end{theorem}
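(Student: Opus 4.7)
My plan is to start from Lemma~\ref{NOrelation}, which already guarantees that the natural orbitals of $\rho^{(\alpha)}$ and $\rho^{(\beta)}$ are related by a unitary $U$ via $|\Phi_k^{(\beta)}\rangle=\sum_j U_{jk}|\Phi_j^{(\alpha)}\rangle$, and then extract structural information about $U$ from the two hypotheses that $\delta^{(\alpha)}_{m;\alpha}$ and $\delta^{(\beta)}_{m;\alpha}$ are small. The target block form will follow once I show that the dominant natural orbitals of $\rho^{(\beta)}$ are approximately contained in the span of the top $m$ natural orbitals of $\rho^{(\alpha)}$, after which a gauge choice on the complementary subspace can be used to set the remaining block to the identity.

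The first step is to note that the smallness of $\delta^{(\alpha)}_{m;\alpha}$ forces $\sum_{k>m}\varepsilon_k^{(\alpha)}\leq\delta^{(\alpha)}_{m;\alpha}$ by Def.~\ref{truncationerror}, so the non-negligible occupational weight of $\rho^{(\alpha)}$ is concentrated in $\{|\Phi_k^{(\alpha)}\rangle\}_{k=1}^m$. The second step is to expand $\rho^{(\beta)}$ in its own natural orbitals and evaluate the truncated trace over the $\alpha$-basis,
\begin{equation}
\mathrm{Tr}_m^{(\alpha)}(\rho^{(\beta)})=\sum_{i=1}^m\sum_k\varepsilon_k^{(\beta)}|U_{ik}|^2,
\end{equation}
so that the hypothesis $\delta^{(\beta)}_{m;\alpha}$ is small rearranges into $\sum_k\varepsilon_k^{(\beta)}\sum_{i>m}|U_{ik}|^2\leq\delta^{(\beta)}_{m;\alpha}$. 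Splitting the $k$-sum at $m$ and using the column normalization $\sum_i|U_{ik}|^2=1$, I conclude that for each $k\leq m$ with appreciable $\varepsilon_k^{(\beta)}$ the weight of the corresponding column of $U$ outside the top $m$ rows is bounded by the truncation errors, and applying the same argument to $UU^\dagger$ gives the symmetric statement for rows.

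Combining the two observations, $U$ acquires an approximate block structure to the precision of $\delta^{(\alpha)}_{m;\alpha}$ and $\delta^{(\beta)}_{m;\alpha}$: an $m\times m$ sub-block mixing the important natural orbitals of $\alpha$ and $\beta$, and a complementary $(M-m)\times(M-m)$ block that acts only on orbitals with negligible occupation in either density matrix. Because the natural orbitals on the complementary subspace are not uniquely determined by the spectral problem (any orthonormal basis of that subspace is an equally valid diagonalizing set), I can exploit this gauge freedom to choose the complementary block equal to the identity without changing either $\rho^{(\alpha)}$ or $\rho^{(\beta)}$ beyond the already-quoted truncation errors, giving the ``nearly identity except for an $m\times m$ sub-block'' form asserted in the theorem. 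The hardest part of the plan will be quantifying ``nearly'': the cross-block matrix elements $U_{ij}$ with $i\leq m<j$ or $j\leq m<i$ need explicit bounds in terms of the two truncation errors and the smallest retained $\varepsilon_k^{(\beta)}$, and I will have to treat the ambiguity in $U$ that arises whenever occupational weights are degenerate, which is where most of the technical care is needed.
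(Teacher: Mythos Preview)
Your proposal is correct and follows essentially the same route as the paper: both arguments compute $\mathrm{Tr}_m^{(\alpha)}(\rho^{(\beta)})$ by expanding $\rho^{(\beta)}$ through the unitary $U$ of Lemma~\ref{NOrelation}, then use the smallness of $\delta^{(\beta)}_{m;\alpha}$ (together with $\delta^{(\alpha)}_{m;\alpha}$) to force the occupied weight of $\rho^{(\beta)}$ into the span of the top $m$ $\alpha$-orbitals, yielding the approximate $m\oplus(M-m)$ block structure for $U$. Your inequality $\sum_k\varepsilon_k^{(\beta)}\sum_{i>m}|U_{ik}|^2\le\delta^{(\beta)}_{m;\alpha}$ is just the complement of the paper's Eq.~\eqref{rotateddensmat} written with column normalization, so the core mechanism is identical.

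The one place you go further than the paper is in explicitly invoking the gauge freedom of the negligibly-occupied subspace to set the complementary $(M-m)\times(M-m)$ block to the identity; the paper simply asserts that the unitary ``appear[s] strongly diagonal'' on that block without separating block-diagonality from being the identity. Your added step is a genuine clarification, and your closing caveat about needing explicit bounds on the off-block entries in terms of the smallest retained $\varepsilon_k^{(\beta)}$ and about degeneracies identifies exactly the points the paper leaves implicit.
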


\begin{proof}
Denote the occupation values of the rotated density matrix $U \rho^{(\beta)}U^\dagger$ as ($\mathrm{Tr}^{(\gamma)}_m$ denotes a truncated trace in a basis $\gamma$ for $m$ of the most relevant orbitals)
\begin{equation}\label{unitaryform}
\mathrm{Tr}^{(\alpha)}_m(U\rho^{(\beta)} U^\dagger)=\sum_{k=1}^m\basiseps_{kk}^{(\beta)}
\end{equation}
and the unrotated value would be the same but without the bar applied. The bar denotes the truncated trace in the space of the most relevant orbitals for $\alpha$. In general, this does not have to be the same evaluation as over the most important $m$ orbitals for $\beta$.

The following statement is equivalent to having a small truncation error in the most relevant orbitals for a state $\alpha$:
\begin{equation}\label{bigcondition}
\mathrm{Tr}(\rho^{(\beta)})-\mathrm{Tr}^{(\alpha)}_m(\rho^{(\beta)})=\delta^{(\beta)}_{m;\alpha}\overset!\cong\delta^{(\alpha)}_{m;\alpha}\ll\mathrm{Tr}(\rho^{(\alpha)})
\end{equation}
where it is clear that the truncation error is small. The trace is taken over the largest $1<m\lll M$ orbitals relevant for the most important natural orbitals for $\alpha$, $\{\Phi^{(\alpha)}\}_m$.  If we have $\delta^{(\beta)}_{m;\alpha}\cong\delta^{(\alpha)}_{m;\alpha}$,then the most important states for $\alpha$ give a small truncation error for $\beta$.

In order for the condition in Eq.~\eqref{bigcondition} to hold, the following must also be true
\begin{equation}\label{rotatedoccupation}
\sum_{k=1}^m\varepsilon_k^{(\alpha)}\approx\sum_{k=1}^m\basiseps_{kk}^{(\beta)}=\sum_{k,k'=1}^m\sum_{\ell=1}^M\varepsilon_{\ell}^{(\beta)}u_{\ell k}u_{\ell k' }^*\delta_{kk'}
\end{equation}
following Lemma~\ref{NOrelation} for an element of the unitary $u_{k\ell}$. We use the index $k$ for the basis of natural orbitals for $\alpha$. The indices $\ell$ and $\ell'$ for the natural orbitals for $\beta$.

There is a relationship between the unrotated coefficients and the $\beta$ state and those for the truncation in the states relevant for $\alpha$ (indexed by $k$),
\begin{equation}\label{rotateddensmat}
\sum_{k,k'=1}^m\sum_{\ell=1}^M\varepsilon_{\ell}^{(\beta)}u_{\ell k}u_{\ell k' }^*\delta_{kk'}\leq\sum_{\ell=1}^m\varepsilon_{\ell}^{(\beta)}
\end{equation}
where the sum $\ell$ over the diagonal elements of $\rho^{(\beta)}$ is taken to be over the most relevant $m$ natural orbitals for the $\beta$ state. The equality is true when the truncation error is small for both $\alpha$ and $\beta$ and only relevant orbitals for $\alpha$ are allowed, then those same orbitals must be relevant for $\beta$. Thus, the terms in the identity matrix appear strongly diagonal except for an $m\times m$ block for states of low truncation error.
\end{proof}

States where the unitary is close to the identity in the irrelevant $(M-m)\times(M-m)$ block of $U$ will be called {\it similar excitations}.

\begin{definition}[Similar excitations]\label{similarexc}
Two excitations have a similar set of $m$ natural orbitals if their density matrices have the relationship that a unitary $U$ transforming one natural orbital set into another has approximately the decomposition $U=W\oplus P$ for a unitary $W$ of size $m\times m$ and an second unitary matrix $P$ of size $(M-m)\times(M-m)$.
\end{definition}

This definition is not meant to be exhaustive or tight for all possible similar states, but it is sufficient for the states here.

The opposite would be {\it dissimilar excitations}.

\begin{definition}[Dissimilar excitations]
Two excitations have a similar set of $m$ natural orbitals but are not approximately of the form $U=W\oplus P$ as defined from similar states.
\end{definition}

This type of state is what would be encountered in the case of a symmetry protected state. Alternatively, two very widely separated centers of a potential $v(x)$ in an eigenvalue problem would also be dissimilar.

\subsection{Truncation errors as a metric distance}

The set of truncated density matrices should also be discussed in the context of a normed vector space, but the metric must be defined appropriately. In essence, we ask: how closely related are two truncated density matrices?

\begin{definition}[Relative truncation]
The relative truncation error between two states $\alpha$ and $\beta$ will be defined as
\begin{equation}\label{truncatedmetric}
r^{(\gamma)}_m(\alpha,\beta)=|\delta^{(\alpha)}_{m;\gamma}-\delta^{(\beta)}_{m;\gamma}|
\end{equation}
where it is very important to note that both truncation errors were evaluated in the same $m$-sized basis generated from excitation $\gamma$ (either $\alpha$ or $\beta$ or some other state).
\end{definition}

In order for the truncated set of natural orbitals from one excitation to be mapped onto a vector space with respect to the natural orbitals of another excitation, there must be a definition of the metric distance. We can show that the relative truncation is a suitable metric.

\begin{theorem}[Relative truncation error as a practical metric distance of truncated density matrices]\label{truncatedmetricdistance}
The relative truncation error can be used as a practical metric distance between states in almost all cases of practical interest. 
\end{theorem}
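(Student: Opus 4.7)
The plan is to verify the standard metric axioms for $r^{(\gamma)}_m(\alpha,\beta)=|\delta^{(\alpha)}_{m;\gamma}-\delta^{(\beta)}_{m;\gamma}|$ in turn, with all real work concentrated in the identity-of-indiscernibles axiom. Non-negativity is immediate from the absolute value, and symmetry follows from $|x-y|=|y-x|$. For the triangle inequality involving a third state $\eta$, inserting and subtracting $\delta^{(\beta)}_{m;\gamma}$ reduces the claim
\begin{equation*}
|\delta^{(\alpha)}_{m;\gamma}-\delta^{(\eta)}_{m;\gamma}|\leq|\delta^{(\alpha)}_{m;\gamma}-\delta^{(\beta)}_{m;\gamma}|+|\delta^{(\beta)}_{m;\gamma}-\delta^{(\eta)}_{m;\gamma}|
\end{equation*}
to the ordinary triangle inequality on $\mathbb{R}$. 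With these three properties in hand, $r^{(\gamma)}_m$ is already a pseudometric on the set of truncated density matrices in a common bundle.

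The substantive step is the identity of indiscernibles, $r^{(\gamma)}_m(\alpha,\beta)=0\iff\rho^{(\alpha)}=\rho^{(\beta)}$. The $\Leftarrow$ direction is trivial. For $\Rightarrow$, my plan is to promote the single scalar equality into a family of equalities by letting $m$ range over $1\leq m\leq M$; together these force every natural-orbital occupation weight of $\alpha$ and $\beta$ to agree when read out in the $\gamma$ basis. Combining this with Lemma~\ref{NOrelation}, which relates the natural orbital sets of $\alpha$ and $\beta$ by a unitary whenever their total traces match, and with Theorem~\ref{UisIdentity}, which shows that matched truncation errors in the most relevant orbitals force that unitary to act as the identity outside an $m\times m$ block, one concludes $\rho^{(\alpha)}=\rho^{(\beta)}$ up to degeneracy.

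The main obstacle, and the reason the statement is qualified by \emph{practical}, is that for a single fixed $m$ and $\gamma$ the relative truncation collapses each density matrix to one real number, so genuinely distinct density matrices can accidentally produce the same tail sum. I would dispatch these exceptions by two observations. First, the locus of density matrices that collide in tail sum with a fixed $\rho^{(\alpha)}$ is of codimension one inside the density-matrix manifold and hence negligible generically. Second, the coincidences that do survive are exactly the symmetry-protected or widely-separated-center situations flagged as dissimilar excitations after Def.~\ref{similarexc}; these can be separated by allowing $m$ and $\gamma$ to vary, which is consistent with how bundles are used in practice. Restricting to bundles of physical interest thereby promotes $r^{(\gamma)}_m$ from a pseudometric to a bona fide metric, completing the proof.
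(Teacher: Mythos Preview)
Your verification of non-negativity, symmetry, and the triangle inequality matches the paper exactly: both reduce these to properties of the absolute value on $\mathbb{R}$, and the paper likewise remarks that the triangle inequality is the ordinary $\mathcal{L}^1$ statement.

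Where you diverge is in the handling of identity of indiscernibles. The paper does not try to establish $r^{(\gamma)}_m(\alpha,\beta)=0\Rightarrow\rho^{(\alpha)}=\rho^{(\beta)}$. It simply notes that this implication can fail (most transparently at $m=M$, where every truncation error is zero) and takes that failure as the \emph{meaning} of the qualifier ``practical.'' That is the entire content of the hedge; no further argument is offered or required.

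Your plan to recover the $\Rightarrow$ direction by varying $m$ and invoking Lemma~\ref{NOrelation} and Theorem~\ref{UisIdentity} does not actually work. First, $r^{(\gamma)}_m$ is defined for a fixed $m$, so sweeping $m$ over $1,\ldots,M$ changes the object you are supposed to be studying. Second, even if you allow that sweep, equality of $\delta^{(\alpha)}_{m;\gamma}$ and $\delta^{(\beta)}_{m;\gamma}$ for all $m$ only pins down the diagonal entries $\langle\Phi^{(\gamma)}_k|\rho^{(\alpha)}|\Phi^{(\gamma)}_k\rangle=\langle\Phi^{(\gamma)}_k|\rho^{(\beta)}|\Phi^{(\gamma)}_k\rangle$; the off-diagonal structure in the $\gamma$ basis is unconstrained. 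Lemma~\ref{NOrelation} merely asserts the existence of a unitary between natural-orbital bases when the traces agree, and Theorem~\ref{UisIdentity} gives that unitary a block form $W\oplus P$, not the identity, so neither result forces $\rho^{(\alpha)}=\rho^{(\beta)}$. Your final codimension/genericity remark is a reasonable heuristic in the spirit of the paper's ``practical'' qualifier, but the intermediate appeal to Lemma~\ref{NOrelation} and Theorem~\ref{UisIdentity} should be dropped as it does not deliver what you claim.
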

\begin{proof}
The absolute value of the truncation error, Eq.~\eqref{truncatedmetric}, satisfies all necessary properties of the metric distance in almost every case of relevance. The metric $r^{(\alpha)}_m$ for some truncation error $\delta^{(\alpha)}$ satisfies
\begin{enumerate}
\item $r^{(\gamma)}_m(\alpha,\alpha)=0$
\item $r^{(\gamma)}_m(\alpha,\beta)\geq0$
\item $r^{(\gamma)}_m(\alpha,\beta)=r^{(\gamma)}_m(\beta,\alpha)$
\item $r^{(\gamma)}_m(\alpha,\beta)\leq r^{(\gamma)}_m(\alpha,\zeta)+r^{(\gamma)}_m(\zeta,\beta)$ [Triangle ineq.]
\end{enumerate}
which are the properties of a metric \cite{dummit2004abstract}. The triangle inequality is satisfied in the same way that the $\mathcal{L}^1$-norm is constructed normally.  Points 1 and 2 follow by definition. However, it is technically possible that $r^{(\gamma)}_m(\alpha,\beta)$ can be equal to zero as is especially evident when $m=M$ as the truncation errors are both zero. Yet, it is generally the case that Point 2 holds for general states even when evaluating the truncation error out to numerical precision for $1\ll m\lll M$. This is why we call it a practical metric instead of simply a metric. The symmetry condition (3) is satisfied if the orbitals used to evaluate the truncation error is the same for both states. The triangle inequality (4) can be readily verified.
\end{proof}

The main takeaway from this identification of a metric is that the amount to which the two states differ (with respect to a common set of states $\gamma$) from each other is communicated through the truncation error. When the description of a given state is accurate to $\delta^{(\alpha)}$, then the states that will be most efficiently bundled are those also with low truncation error $\delta^{(\beta)}$.

The practical effect of this is that the truncation error is not only meaning the amount of information lost in a truncation to $m$ orbitals, but the difference between truncation errors is also communicating the distance between the excitations, a highly remarkable feature especially in the context of entanglement renormalization which normally only assigns the truncation error as a means of uncertainty. The metric here implies it can also determine how well an excitation is described in a given basis.

\subsection{Energy differences as a metric distance}

With regards to the identification of the metric for the truncation error, it is well-known that the truncation error implies an uncertainty in the energy, so it is often used as an error measure for entanglement renormalization methods \cite{dmrjulia1}. Because of this relationship between the truncation error and uncertainty on expectation values, it is natural to ask if another quantity can also serve as a metric distance, a quantity that is more physical.

The most readily available quantity in many cases is the energy difference. We seek to establish if the differences in energies can be proven to be a metric as well.

\begin{theorem}[Energies of truncated density matrices are a practical metric distance]\label{metric_diffE}
Two density matrices (for states $\alpha$ and $\beta$) truncated to order $m$ in some basis have a metric distance given by the absolute value of the energy difference, $|\Delta E^{(m)}_{\alpha\beta;\gamma}|$ with $\Delta E^{(m)}_{\alpha\beta;\gamma}=E^{(m)}_{\alpha;\gamma}-E^{(m)}_{\beta;\gamma}$, for local Hamiltonians expressed in local basis sets if not both of $E_\alpha$ and $E_\beta$ are zero.
\end{theorem}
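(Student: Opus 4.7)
The plan is to verify the four metric axioms for $d(\alpha,\beta)=|\Delta E^{(m)}_{\alpha\beta;\gamma}|$ in close analogy with the proof of Theorem~\ref{truncatedmetricdistance}. Writing $E^{(m)}_{\alpha;\gamma}=\mathrm{Tr}^{(\gamma)}_m(\rho^{(\alpha)}H)$, the self-adjointness of $H$ and the linearity of the truncated trace make $E^{(m)}_{\alpha;\gamma}\in\mathbb{R}$, so nonnegativity, symmetry under exchange of $\alpha$ and $\beta$, and the identity $d(\alpha,\alpha)=0$ are immediate from the algebra of the absolute value on $\mathbb{R}$. The triangle inequality reduces to inserting and subtracting a third energy $E^{(m)}_{\zeta;\gamma}$ and applying $|x-y|\leq|x-z|+|z-y|$ to the triplet $(E^{(m)}_{\alpha;\gamma},E^{(m)}_{\zeta;\gamma},E^{(m)}_{\beta;\gamma})$, which is a one-line manipulation once the previous step is in place.

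The substantive content is in the hypothesis of a local Hamiltonian expressed in a local basis, which is what makes the truncated energy a meaningful proxy for the true energy in the first place. For a non-local $H$ the matrix elements need not decay and the truncated trace $\mathrm{Tr}^{(\gamma)}_m(\rho^{(\alpha)}H)$ can fail to approximate $E_\alpha$ or even be well defined. Using Def.~\ref{localH} together with the locality of the natural orbitals established in Lemma~\ref{localNOs}, I would bound $|E_\alpha-E^{(m)}_{\alpha;\gamma}|$ by a constant multiple of the truncation error $\delta^{(\alpha)}_{m;\gamma}$, so the truncated energies are controlled by the density-matrix-level metric already shown to exist in Theorem~\ref{truncatedmetricdistance}. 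This also rationalises the exclusion that not both of $E_\alpha$ and $E_\beta$ vanish: if both are zero, then $d(\alpha,\beta)=0$ holds vacuously with no information about the underlying states, which is the trivial failure mode of axiom~(1) that the hypothesis rules out.

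The hard part, as in Theorem~\ref{truncatedmetricdistance}, is the converse of axiom~(1): $d(\alpha,\beta)=0\Rightarrow\alpha=\beta$ can fail when two distinct excitations happen to be energetically degenerate, and this is what reduces the statement from a metric to a \emph{practical} metric. I expect the main subtlety to be arguing that such degeneracies are non-generic for local Hamiltonians in local bases outside of protected sectors. The natural route is to appeal to the dichotomy of correlations in Def.~\ref{localcorrelation}: for a local Hamiltonian, generic perturbations split accidental degeneracies exponentially or polynomially, leaving only the symmetry-induced ones, which are already flagged as dissimilar excitations in the classification following Theorem~\ref{UisIdentity}. Outside of these measure-zero exceptional points the axiom holds, so the four properties combine to give a practical metric distance, completing the argument.
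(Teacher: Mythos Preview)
Your verification of the metric axioms is correct and more direct than the paper's argument: once $E^{(m)}_{\alpha;\gamma}\in\mathbb{R}$, the four axioms are inherited from the absolute value on $\mathbb{R}$, and the ``practical'' caveat for degeneracies is exactly the one the paper makes. Where you diverge is in the role of locality. You use the local-Hamiltonian and local-basis hypotheses to argue that $E^{(m)}_{\alpha;\gamma}$ approximates $E_\alpha$ up to the truncation error, tying the energy metric back to the $r^{(\gamma)}_m$ metric of Theorem~\ref{truncatedmetricdistance}. The paper instead uses locality constructively: it expands $\rho^{(\beta)}$ in the $\alpha$ natural-orbital basis via the unitary of Lemma~\ref{NOrelation}, inserts a resolution of the identity before $H$, and then invokes the ultra-local limit (Def.~\ref{ultralocalH}) together with Lemma~\ref{localNOs} to collapse the expression to the diagonal form
\[
\Delta E_{\alpha\beta}^{(m)}=\sum_{i\mathcal{A}}\sum_{k=1}^m C_{i\mathcal{A}}\bigl(\varepsilon^{(\alpha)}_k-\basiseps^{(\beta)}_k\bigr)\,|\langle\Phi^{(\alpha)}_k|i\mathcal{A}\rangle|^2,
\]
from which the metric axioms are read off. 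Your route is cleaner for the theorem as stated; the paper's route buys the explicit occupation-difference formula above, which is the working tool for the subsequent corollary on large energy differences and for Lemma~\ref{SVDs_overlap} on singular vectors. If you adopt your argument, you should be aware that you will need to supply an analogue of that formula separately before those later results go through.
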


\begin{proof}
Consider the energy difference in a given basis written as
\begin{equation}
\Delta E_{\alpha\beta;\gamma}^{(m)}=\left.\mathrm{Tr}_m^{(\gamma)}\left((\rho^{(\alpha)}-\rho^{(\beta)})H\right)\right|_{\gamma=\alpha}
\end{equation}
where we select the $\gamma=\alpha$ basis here for clarity but can select any basis without loss of generality. An equivalent way to express this is to take the trace over all real-space positions but only expand the density matrices out to order $m$ in the most relevant orbitals for $\alpha$. The resulting energy difference in the most relevant natural orbitals for $\alpha$ is then
\begin{align}\label{deltaEinbasis}
\Delta E_{\alpha\beta;\alpha}^{(m)}=&\sum_{i\mathcal{A}}\langle i\mathcal{A}|\sum_{k=1}^m\left(\varepsilon^{(\alpha)}_k|\Phi^{(\alpha)}_k\rangle\langle\Phi^{(\alpha)}_k|\right.\\
&\left.-\sum_{\ell=1}^M\sum_{k'=1}^m\varepsilon^{(\beta)}_\ell u_{\ell k}u_{\ell k' }^*|\Phi^{(\alpha)}_k\rangle\langle\Phi^{(\alpha)}_{k'}|\right)H|i\mathcal{A}\rangle\nonumber
\end{align}
where $\rho^{(\beta)}=\sum_\ell\varepsilon_\ell^{(\beta)}|\Phi_\ell^{(\beta)}\rangle\langle\Phi_\ell^{(\beta)}|$ with $\ell$ indexing the complete space and with the unitary defined earlier in Eq.~\eqref{unitaryform}.

 Let the definition
\begin{equation}
\basiseps_{kk'}^{(\beta)}=\sum_{\ell=1}^M\varepsilon^{(\beta)}_\ell u_{\ell k}u_{\ell k' }^*
\end{equation}
hold in the following. A complete set of states can be inserted before $H$ to give
\begin{equation}
H|i\mathcal{A}\rangle\rightarrow\sum_{j\mathcal{B}}|j\mathcal{B}\rangle\langle j\mathcal{B}| H|i\mathcal{A}\rangle
\end{equation}
where the term $\langle i\mathcal{A}|H|j\mathcal{B}\rangle$ can be thought of as the Hamiltonian in real space. For local Hamiltonians, this term goes to a delta function for large differences between the sites $i,j,\mathcal{A},\mathcal{B}$ denoted as $h_{ij\mathcal{A}\mathcal{B}}$. 

In the local limit where $(h_{ij\mathcal{A}\mathcal{B}}\rightarrow C_{ {i\mathcal{A}}}\delta_{i\mathcal{A},j\mathcal{B}}$), Eq.~\eqref{deltaEinbasis} becomes
\begin{align}
\Delta E_{\alpha\beta}^{(m)}=&\sum_{{i\mathcal{A}}}\langle {i\mathcal{A}}|\sum_{k=1}^m\left(\varepsilon^{(\alpha)}_k|\Phi^{(\alpha)}_k\rangle\langle\Phi^{(\alpha)}_k|\right.\\
&\left.-\sum_{k'=1}^m\basiseps^{(\beta)}_{kk'}|\Phi^{(\alpha)}_k\rangle\langle\Phi^{(\alpha)}_{k'}|\right)|{i\mathcal{A}}\rangle C_{{i\mathcal{A}}}\nonumber
\end{align}

At this point, the density matrix for $\beta$ projected into the orbitals for $\alpha$ is not diagonal. If we impose that the orbital basis is local, as in Lemma~\ref{localNOs} for natural orbitals, then this implies that the as the difference between $k$ and $k'$ are large, then the overlap between functions must go to zero. Recall that we only consider orthogonal basis sets. This means that $\sum_\ell u_{\ell k}u_{\ell k'}^*=\delta_{kk'}$ in the ultra-local limit. Consequently, the expression reduces to 
\begin{align}\label{deltaEmetric}
\Delta E_{\alpha\beta}^{(m)}=&\sum_{{i\mathcal{A}}}\sum_{k=1}^mC_{{i\mathcal{A}}}\left(\varepsilon^{(\alpha)}_k-\basiseps^{(\beta)}_{k}\right)|\langle\Phi^{(\alpha)}_k|{i\mathcal{A}}\rangle|^2.
\end{align}
where $\basiseps_k^{(\beta)}$ is the diagonal coefficients of $\rho^{(\beta)}$ in the basis of the most relevant orbitals for $\alpha$. 

With an absolute value sign, $|\Delta E^{(m)}_{\alpha\beta;\gamma}|$, one can verify the axioms of a metric for $1\ll m\lll M$ are practically satisfied as presented in Thm.~\ref{truncatedmetricdistance}. 

There can be an ambiguity in the definition of the metric if the states are zero but this is not typical of eigenvalue problems relevant here. So, we exclude this case.
\end{proof}

Theorem~\ref{metric_diffE} made heavy use of the ultra-local limit where the Hamiltonian is effectively diagonal. This and several other features are discussed here as useful to derive the core statements, but understanding how things would change if the Hamiltonian is longer-range is worth performing. 

If the Hamiltonian contains longer range terms, then terms like $\langle i\mathcal{A}|H|j\mathcal{B}\rangle$ to some relative distance between $i\mathcal{A}$ and $j\mathcal{B}$ can be incorporated.

Many common basis functions decay exponentially. Notably, the Gaussian basis set decays exponentially away from the origin of the Gaussian. Thus, in the regime where matrix product states are best applied (local, gapped Hamitonians), the local approximation is a very good one for the excitations.

The ultra-local limit is not the most general form that the unitary can take here for arbitrary problems, but it is most applicable one for the tensor network case we find below where there is merely a permutation of the elements in the same basis. All of the excitations will be written in the same basis of entangled states and the density matrix eigenvalues of each is in an $m\times m$ basis ({\it i.e.}, effectively all of the occupational weights can be discovered in $g$ same-size matrices which all are diagonal and in the same basis). If one prefers, then $\{\Phi_k^{(\alpha)}\}$ and $\{\Phi_k^{(\beta)}\}$ belong to a common set $\gamma$ and an $m\times m$ tensor with occupational weights can be identified for both states in that same basis.

\subsection{Large energy differences}

We can now make a general statement about large energy differences.

\begin{corollary}[States with large energy differences]
Large energy differences imply larger differences between density matrices.
\end{corollary}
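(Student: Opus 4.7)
The plan is to chain the two metric-distance results already established in the paper, Theorem~\ref{truncatedmetricdistance} and Theorem~\ref{metric_diffE}, by observing that both metrics are controlled by the same underlying quantity: the differences in the occupational weights between $\rho^{(\alpha)}$ and $\rho^{(\beta)}$ expressed in a common natural-orbital basis. The corollary is a qualitative statement, so the target is a monotone bound rather than an equality, and the argument should follow directly once those two metrics are aligned on a common set of coefficients.

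First I would start from Eq.~\eqref{deltaEmetric}, which in the ultra-local limit from Def.~\ref{ultralocalH} writes
\begin{equation}
\Delta E^{(m)}_{\alpha\beta}=\sum_{i\mathcal{A}}\sum_{k=1}^m C_{i\mathcal{A}}\left(\varepsilon^{(\alpha)}_k-\basiseps^{(\beta)}_k\right)|\langle\Phi^{(\alpha)}_k|i\mathcal{A}\rangle|^2.
\end{equation}
Since the local coefficients $C_{i\mathcal{A}}$ and the amplitudes $|\langle\Phi^{(\alpha)}_k|i\mathcal{A}\rangle|^2$ are fixed data of the chosen basis and are bounded by Lemma~\ref{localNOs} together with Def.~\ref{localH}, a large value of $|\Delta E^{(m)}_{\alpha\beta}|$ forces the weighted $\mathcal{L}^1$-like sum $\sum_{k=1}^m|\varepsilon^{(\alpha)}_k-\basiseps^{(\beta)}_k|$ to be correspondingly large. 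This is the step that turns an energy difference into a statement about the diagonal entries of the two density matrices in a common basis.

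Next I would relate those occupational-weight differences to the relative truncation metric of Eq.~\eqref{truncatedmetric}. Because we restrict to bundles with $\mathrm{Tr}(\rho^{(\alpha)})=\mathrm{Tr}(\rho^{(\beta)})$ as stated after Lemma~\ref{NOrelation}, the quantity $\delta^{(\alpha)}_{m;\gamma}-\delta^{(\beta)}_{m;\gamma}$ equals, up to a sign, the difference of the retained sums $\sum_{k\le m}(\varepsilon^{(\alpha)}_k-\basiseps^{(\beta)}_k)$. Hence $r^{(\gamma)}_m(\alpha,\beta)$ is driven by the same set of occupational differences that drive $|\Delta E^{(m)}_{\alpha\beta}|$, and a lower bound on one yields a lower bound on the other (up to the bounded basis-dependent prefactors). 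One can then close the argument with the contrapositive of Theorem~\ref{UisIdentity}: if these occupational differences do not remain small, the unitary of Lemma~\ref{NOrelation} cannot take the block form of Def.~\ref{similarexc}, so $\rho^{(\alpha)}$ and $\rho^{(\beta)}$ must differ as operators and not merely in a scalar summary.

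The main obstacle is the qualitative wording of the statement: ``large'' and ``larger'' are not quantified, and the two metrics are labeled only as \emph{practical} metrics in the sense of Theorems~\ref{truncatedmetricdistance} and \ref{metric_diffE}. I expect to phrase the conclusion as a one-sided monotone inequality of the form $|\Delta E^{(m)}_{\alpha\beta}|\lesssim K_{\gamma}\,r^{(\gamma)}_m(\alpha,\beta)$, with $K_{\gamma}$ a constant set by the ultra-local couplings $C_{i\mathcal{A}}$ and the amplitudes $|\langle\Phi^{(\alpha)}_k|i\mathcal{A}\rangle|^2$, so that taking $|\Delta E^{(m)}_{\alpha\beta}|$ large forces the density-matrix distance $r^{(\gamma)}_m(\alpha,\beta)$ to be at least $|\Delta E^{(m)}_{\alpha\beta}|/K_{\gamma}$. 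This converts the informal corollary into a clean contrapositive of the already-established fact that similar density matrices (small $r^{(\gamma)}_m$) have small energy differences.
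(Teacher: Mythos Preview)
Your first step---bounding $|\Delta E^{(m)}_{\alpha\beta}|$ by $C_{\max}\sum_{k\le m}|\varepsilon^{(\alpha)}_k-\basiseps^{(\beta)}_k|$ via Eq.~\eqref{deltaEmetric}---is fine and is essentially what the paper does. The gap is in your second step, where you try to pass from that $\ell^1$-type quantity to the relative truncation metric $r^{(\gamma)}_m(\alpha,\beta)$. Under the equal-trace assumption you invoke, $r^{(\gamma)}_m(\alpha,\beta)=\bigl|\sum_{k\le m}(\varepsilon^{(\alpha)}_k-\basiseps^{(\beta)}_k)\bigr|$ is the \emph{absolute value of a sum}, not the \emph{sum of absolute values}. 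Sign cancellations among the occupational differences can make $r^{(\gamma)}_m$ arbitrarily small while $\sum_k|\varepsilon^{(\alpha)}_k-\basiseps^{(\beta)}_k|$, and hence $|\Delta E^{(m)}_{\alpha\beta}|$, remains large. So the inequality you are aiming for, $|\Delta E^{(m)}_{\alpha\beta}|\lesssim K_\gamma\,r^{(\gamma)}_m(\alpha,\beta)$, does not follow; the triangle inequality points the wrong way here. The appeal to the contrapositive of Theorem~\ref{UisIdentity} does not repair this, since that theorem concerns smallness of the individual $\delta$'s, not of their difference.

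The paper avoids this trap by not routing through $r^{(\gamma)}_m$ at all. It measures ``difference between density matrices'' with the entrywise norm $|\rho^{(\alpha)}-\rho^{(\beta)}|_F=\sum_{k,\ell}|\tilde\rho^{(\alpha)}_{k\ell}-\tilde\rho^{(\beta)}_{k\ell}|$, obtained by replacing the parentheses in Eq.~\eqref{deltaEmetric} with absolute values and then pulling out $C_{\max}$. That norm \emph{is} the $\ell^1$ sum of occupational differences in the diagonal basis, so the bound $|\Delta E^{(m)}_{\alpha\beta}|/C_{\max}\le |\rho^{(\alpha)}-\rho^{(\beta)}|_F$ goes through without any cancellation issue. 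If you want to salvage your write-up, replace $r^{(\gamma)}_m$ by this Frobenius-type quantity as the target ``density-matrix distance''; then your first paragraph already contains the whole proof.
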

\begin{proof}
Note that
\begin{equation}\label{deltaEineq}
|\Delta E_{\alpha\beta;\gamma}^{(m)}|\leq\left|\mathrm{Tr}_m\Big(|\rho^{(\alpha)}-\rho^{(\beta)}|H\Big)\right|\leq\mathrm{Tr}\Big(|\rho^{(\alpha)}-\rho^{(\beta)}|H\Big)
\end{equation}
where $|\rho^{(\alpha)}-\rho^{(\beta)}|$ is the element-wise absolute value implemented in the following way
\begin{equation}
|\rho^{(\alpha)}-\rho^{(\beta)}|=\sum_{k,\ell=1}^m\left|\tilde \rho_{k\ell}^{(\alpha)}-\tilde\rho_{k\ell}^{(\beta)}\right||\Phi^{(\gamma)}_k\rangle\langle\Phi^{(\gamma)}_\ell|
\end{equation}
where tildes denote that the density matrices are rotated into the $\gamma$ basis. Effectively, one simply replaces round braces around the occupation values in Eq.~\eqref{deltaEmetric} by an absolute value. By comparison with Eq.~\eqref{deltaEmetric}, then Eq.~\eqref{deltaEineq} is satisfied. Further, recall that coefficients $C_{{i\mathcal{A}}}$ in Eq.~\eqref{deltaEmetric} set an effective energy scale. The maximum of which, $C_\mathrm{max.}$, can act as a normalization to give
\begin{equation}
\frac{|\Delta E_{\alpha\beta;\gamma}^{(m)}|}{C_\mathrm{max.}}\leq\frac{\left|\mathrm{Tr}_m(|\rho^{(\alpha)}-\rho^{(\beta)}|H)\right|}{C_\mathrm{max.}}\leq|\rho^{(\alpha)}-\rho^{(\beta)}|_F
\end{equation}
where$|\rho^{(\alpha)}-\rho^{(\beta)}|_F$ is the Froebenius norm and since coefficients of $H$ can be either positive or negative. This is true by inspection of Eq.~\eqref{deltaEmetric} and $|\rho^{(\alpha)}-\rho^{(\beta)}|_F=\sum_{k,\ell=1}^m|\rho_{k\ell}^{(\alpha)}-\rho_{k\ell}^{(\beta)}|$. The sum over $k$ and $\ell$ in the Froebenius norm can be truncated to $m$ or kept in the full basis.

Summarily, when $|\Delta E_{\alpha\beta;\gamma}^{(m)}|$ is large, then the density matrices must be different as conveyed by the Froebenius norm. So, a large energy difference implies a large difference between the states $\alpha$ and $\beta$.
\end{proof}

The result is valid in any basis if the operator is local. The use of a finite number of states $m$ for the orbital basis makes this result useful in a truncated space.

So, density matrices with large energy differences necessarily have large differences. The opposite is not so clearly defined. If the energies are low, they can either be a similar state, in which the most relevant basis for one excitation is relevant for another state. Alternatively, if the state is dissimilar, then the basis relevant for both excitations is very different, as discussed earlier.

There are several useful consequences of the previous derivation of the main result. We will discuss them here before moving towards the solution with partial density matrices.

\begin{corollary}[Transitive property for many states]\label{manystates}
The results of Thm.~\ref{metric_diffE} generalize to all nearby ($\Delta E_{\alpha\beta}\approx0$) states with low truncation error.
\end{corollary}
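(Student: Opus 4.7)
The plan is to lift Thm~\ref{metric_diffE}, which established $|\Delta E^{(m)}_{\alpha\beta;\gamma}|$ as a practical metric distance on pairs of truncated density matrices, to an arbitrary finite collection of states by invoking axiom~(4) (the triangle inequality) already verified in Thm~\ref{truncatedmetricdistance}. Concretely, let $\{\alpha_1,\ldots,\alpha_g\}$ be a set of excitations with $|\Delta E^{(m)}_{\alpha_i\alpha_{i+1};\gamma}|\approx 0$ and $\delta^{(\alpha_i)}_{m;\gamma}\ll \mathrm{Tr}(\rho^{(\alpha_i)})$ for a common basis $\gamma$. Repeated application of the triangle inequality yields
\begin{equation}
|\Delta E^{(m)}_{\alpha_i\alpha_j;\gamma}|\leq\sum_{k=i}^{j-1}|\Delta E^{(m)}_{\alpha_k\alpha_{k+1};\gamma}|,
\end{equation}
so that smallness at every link controls the metric distance between any two endpoints. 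An analogous inequality for the relative truncation $r^{(\gamma)}_m$ from Thm~\ref{truncatedmetricdistance} would be recorded in parallel to emphasize that both practical metrics extend in the same way.

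The substantive step is verifying that a single $\gamma$ can be chosen for the entire bundle so that the hypotheses of Thm~\ref{metric_diffE} hold uniformly across all pairs, not just along a chosen chain. I would invoke Thm~\ref{UisIdentity} once for each index $i$ relative to $\alpha_1$: the requirement that $\delta^{(\alpha_i)}_{m;\gamma}$ be small in the basis $\gamma$ of the $m$ most relevant natural orbitals of $\alpha_1$ forces each connecting unitary to decompose as $U_i = W_i\oplus P_i$, with the $(M-m)\times(M-m)$ block essentially frozen to the identity on the irrelevant subspace. The ultra-local reduction leading to Eq.~\eqref{deltaEmetric} then applies to every pair $(\alpha_i,\alpha_j)$, producing well-defined diagonal coefficients $\basiseps^{(\alpha_i)}_k$ for each state in the common $\gamma$ basis, and the practical-metric axioms of Thm~\ref{truncatedmetricdistance} carry over verbatim to the multi-state setting.

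The main obstacle, in my view, is not the combinatorics of chaining but ensuring transitivity of the similar-excitation property of Def.~\ref{similarexc}: two states each similar to a third need not be similar to each other, because composing two $W\oplus P$ unitaries can enlarge the effective relevant subspace beyond the original $m$-sized block. I would sidestep this by adopting the stronger stated hypothesis that all $\delta^{(\alpha_i)}_{m;\gamma}$ are simultaneously small for the same $\gamma$, so a single $m$-dimensional relevant subspace is fixed from the start and every pairwise unitary is block-structured with respect to the same decomposition. A short remark would acknowledge that the triangle-inequality bound accumulates at worst linearly with $g$, so the metric interpretation degrades gracefully as the bundle grows and is tightest when $g$ is modest, which is precisely the regime relevant to the bundled MPS applications motivating the paper.
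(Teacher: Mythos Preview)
Your proposal is correct and follows essentially the same approach as the paper: both arguments chain pairwise applications of Thm.~\ref{metric_diffE} (via Eq.~\eqref{deltaEmetric}) across consecutive excitations to conclude the multi-state result. The paper's proof is a two-line sketch restricted to a single symmetry sector, whereas you make the chaining explicit through the triangle inequality and add the careful discussion of a common basis $\gamma$ via Thm.~\ref{UisIdentity}; these refinements go beyond what the paper records but do not change the underlying route.
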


\begin{proof}[Proof]
Reconsider the form of Eq.~\eqref{deltaEmetric} but now for a chain of energy differences between several states of a single symmetry sector. The first energy difference satisfies Eq.~\eqref{deltaEmetric} as does the next set of two excitations. Therefore, as a transitive property, all wavefunctions with small energy differences share a high degree of overlap in their truncated density matrices.
\end{proof}

Taken together, the results of this section show that partial density matrices with low energy can be bundled efficiently ({\it i.e.}, a basis $m$ can give a small truncation error for both states) together. Each excitation added to the bundle comes at either no additional cost in terms of the number of natural orbitals required for a small truncation error if the states are similar. For small energy differences, there is a possibility that the small energy states cost no more than making the lowest energy state. When bundling two states in the bulk of the eigenvalue spectrum together, one must only pay the cost of one of the states, the second comes at a small cost if similar.

When adding a dissimilar state to the bundle, a sufficient number of natural orbitals must be added to the orbital basis to give a low truncation error for that dissimilar state. Once the new dissimilar state is added to the bundle, a new set of nearby similar states can be bundled at small cost.

\subsection{Application to quantum chemistry}

The primary focus of this paper is on locally entangled systems. This is not the general case in quantum chemistry where models have longer range interaction. In this case, the entanglement of the states is known to be much larger than for the local models as can be seen from direct computation. In particular, we note that the extension of models to two dimensions can cause an exponential increase in the entanglement with the width of the system \cite{schollwock}.

This would cast doubt on whether the ultra-local limit will apply in a quantum chemistry system. It turns out that by use of the singular value decomposition (SVD), the basis states of the model can be written such that the basis functions between two states are identical for two different systems. For example, writing two excitations and then cutting the system with the SVD at the same bond in each system will give a form like $\Psi=UDV^\dagger$ which has grouped basis functions to the left of the cut (contained in $U$) and the other basis functions for the right of the cut (contained in $V^\dagger$). Thus, the same basis functions can be assigned for both $U$ and $V^\dagger$ for both states. Between the two states, the basis functions may be different in practice, but we can always find a unitary transformation that makes the states match between the $U$ ($V^\dagger$) for one state and the other.

However, the two $D$ matrices are still diagonal. This is exactly the ultra-local limit as we applied it to the density matrix (but not the Hamiltonian from Def.~\ref{ultralocalH}) because the only difference is that the two states have different occupation numbers inside of $D$. Thus, only a permutation is allowed from the basis functions to rearrange the occupation numbers of the first state to the second. Thus, the ultra-local limit, in the construction of the SVD, is valid even between two different states. We merely motivated the limit by applying an understanding of the real-space orbitals previously.

Thus, quantum chemistry can fit into this hierarchy established here. The computational bottleneck appears in retaining enough states in the SVD in the quantum chemistry system. Because entanglement is larger, there must be more states retained, thus the method is less efficient.

\section{Density matrix renormalization}
Up to now, all considerations have been for the full wavefunction and full density matrix and $n$-body reduced density matrices. The statement for the full wavefunction is not completely useful for the solution in a tensor network decomposition where partial density matrices on either the left or right partition of a given system (partitioned in the sense of partitioning a graph) is the relevant quantity for the eventual solution. 

\begin{definition}[Partial density matrix]
A system partitioned into two sets can identify basis functions for each set. The basis functions may be used to project the full density matrix into a reduced site representation. The density matrix on one half of the system is equivalent to the full density matrix but traced out on the other half. No matter how the system is partitioned, the entanglement expressed by one partial density matrix must be equal to the complementary density matrix's entanglement.
\end{definition}

We do not rule out that a null set can be used for one of the two partitions, but this would imply that the partial density matrix is equivalent to the full density matrix. The sites do not need to be contiguous, but they are chosen to be such here.

\subsection{Matrix product states}

An entanglement renormalization algorithm explicitly calculates the components of a density matrix partitioned between two parts of a system with the use of a singular value decomposition (see Ref.~\onlinecite{bakerCJP21,*baker2019m,dmrjulia1} for an explicit derivation). The following definitions cover the matrix product state and make use of the above theorems.

\begin{definition}[Matrix product state]
A wavefunction with degrees of freedom $\sigma_i$ on each site $i$ is written as
\begin{equation}\label{wavefctdef}
|\psi\rangle=\sum_{\{\sigma_i\}}c_{\sigma_1\sigma_2\sigma_3\sigma_4\ldots}|\sigma_1\sigma_2\sigma_3\sigma_4\ldots\rangle
\end{equation}
for some probability amplitudes $c_{\sigma_1\sigma_2\sigma_3\sigma_4\ldots}\in\mathbb{C}$. By performing a series of reshapes and SVDs \cite{dmrjulia1}, Eq.~\eqref{wavefctdef} can be decomposed into a series of tensors \cite{bakerCJP21,*baker2019m}
\begin{equation}
|\psi\rangle=\sum_{\substack{\{\sigma_i\}, \{a_i\}}}
A^{\sigma_1}_{a_1}A^{\sigma_2}_{a_1a_2}\ldots D^{\sigma_i}_{a_{i-1}a_i}\ldots 
B^{\sigma_{N}}_{a_{N-1}}|\sigma_1\ldots\sigma_N\rangle
\end{equation}
for a number of sites, $N$. Raised and lowered indices are for ease of viewing as there is no notion of covariant or contravariant indices \cite{schutz2009first}. Raised and lowered indices are used here to signify the partitioning of the lattice via a reshaping operation \cite{bakerCJP21,*baker2019m}. The index introduced by the SVD ($a_i$) is known as a link index and its dimension is called a {\it bond dimension}. The orthogonality center $D$ contains the weight of the basis sets and can be gauged to any site or bond \cite{dmrjulia1}. All tensors to the left of the center of orthogonality ($A$) are {\it left-normalized}. Similarly, all tensors to the right of the orthogonality center ($B$) are said to be {\it right-normalized}. Contraction of any left- or right-normalized matrix with itself leaving only two indices corresponding to $a_i$ uncontracted yields an identity matrix.
\end{definition}

Eq.~\eqref{wavefctdef} can be represented with Penrose's graphical notation as in Fig.~\ref{MPSfig} \cite{penrose1971angular}. Vertical lines on the MPS correspond to the $\sigma$ indices and $a$ are the horizontal indices.

Because of how the density matrices are truncated in the MPS through the $D$ matrix retaining the largest values, we can remark that the basis states that are kept represent the most entangled basis functions between the two partitions. Thus, the use of natural orbitals in the previous derivations and theorems now becomes a set of the most entangled basis functions between the two partitions.

\subsection{Matrix product states represent the ground-state faithfully}

In a seminal work in Ref.~\onlinecite{verstraete2006matrix}, it was demonstrated that the MPS can represent the true ground state with a low error. The rigorous result is established with a bound on the Renyi entropy in relation to the truncation error of the MPS. The argument demonstrates that locally entangled models are efficiently described by the MPS ansatz and there is a generalization to the MPS in higher dimensions and connections to the area law of entanglement.

The techniques used in this paper veer closer to those used in quantum chemistry, making use of the MPS's relationship to the full density matrix. We recast the results of Ref.~\onlinecite{verstraete2006matrix} into the tools used here in case it is useful.

Consider a wavefunction of $N$ sites. Reshaping the $N$ degrees of freedom into two groups, a left group and a right group allows us to take a singular value decomposition of the form (we drop the dagger from $V$ when writing it in terms of tensor components for clarity)
\begin{equation}
\psi_{\sigma_1\ldots\sigma_N}=U^{\sigma_1\ldots\sigma_j}_{a_{j-1}}D_{a_{j-1}a_j}V_{a_j}^{\sigma_{j+1}\ldots\sigma_N}
\end{equation}
where $\sigma_i$ represents the degrees of freedom locally on each site and the wavefunction was decomposed on the $j$th bond. Raised and lowered indices do not mean anything and are only used for clarity.

The decomposition according to the SVD gives the elements necessary to construct the partial density matrix for the left
\begin{equation}
\rho_L=UD^2U^\dagger
\end{equation}
and right
\begin{equation}
\rho_R=VD^2V^\dagger
\end{equation}
of the system \cite{bakerCJP21,*baker2019m,dmrjulia1}. The occupations of the natural orbitals are contains in the $D^2$ matrices and are known to decay rapidly. Note that if we had the expectation value of the Hamiltonian, $H$,
\begin{equation}
E=\mathrm{Tr}(\rho H)=\sum_i\lambda_ih_{ii}
\end{equation}
where the basis of the natural orbitals was used to make the density matrix diagonal. If the orbital occupations of the density matrix are $\rho_{ii}=\lambda_i$ and ordered as $1\geq\lambda_1\geq\lambda_2\geq\lambda_3\geq\lambda_4\ldots\geq\lambda_N\geq0$, and decay rapidly, which is an assumption we use throughout.

Because the occupation values of the density matrix decay rapidly, the entanglement, $S=-\mathrm{Tr}(\rho\ln\rho)$ is well approximated. Thus, if the entanglement is low, then the truncation error is also low \cite{verstraete2006matrix}.

This argument we use here leaves out the extensions to the area law of entanglement, but we find these methods useful for the extension to the bundled MPS case.

\subsection{Bundled matrix product states}

The bundled MPS represents several density matrices considered together but independently. We note a key distinction that when writing a wavefunction for two excitations $\Psi=(\psi_1+\psi_2)/\sqrt2$ that the states individually would form a pure density matrix but that the combination is mixed. The density matrices for each state in the bundle would still be pure. Since the wavefunctions are considered independent from each other is the key difference between the bundled and ensemble density matrices.  The most common form will be a set of mixed density matrices (with or without normalization of the trace to a value of 1).

\begin{figure}[t]
\includegraphics[width=\columnwidth]{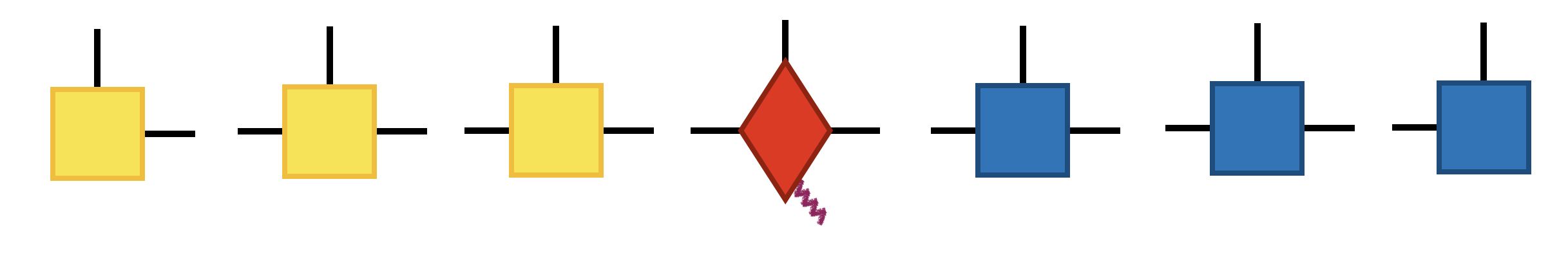}
\caption{The matrix product state using the graphical notation owed to Penrose \cite{bakerCJP21,*baker2019m}. A bundled MPS has the extra index on the orthogonality center while the regular MPS does not.\label{MPSfig}}
\end{figure}

\begin{definition}[Bundled matrix product state]
A ensemble of excitations can be added to the MPS with the additional of another index $\xi$ onto the orthogonality center.
\begin{equation}\label{ensembleMPSdef}
|\Psi\rangle=\sum_{\substack{\{\sigma_i\},\\ \{a_i\}, \xi}}
A^{\sigma_1}_{a_1}A^{\sigma_2}_{a_1a_2}\ldots D^{\sigma_i\xi}_{a_{i-1}a_i}\ldots
B^{\sigma_{N}}_{a_{N-1}}|\sigma_1\ldots\sigma_N\rangle|\xi\rangle
\end{equation}
where $\psi_\xi$ represents $g\in\mathbb{N}$ excitations indexed by $\xi$. This is also referred to as a bundled MPS.
\end{definition}

\subsection{Construction of the bundled matrix product state}

An explicit construction of the exact bundled MPS was demonstrated in Ref.~\onlinecite{baker2024direct} from a group of MPSs. Note that that construction in Ref.~\onlinecite{baker2024direct} over-determines the form of the left- and right-normalized matrices. The ability to compress the bundled MPS is the main point illuminated by the results in this paper ({\it i.e.}, many states are similar enough to reduce the size of the left- and right-normalized tensors).

An alternative derivation to Ref.~\onlinecite{baker2024direct} can be found from the traditional form of a set of excitations as an $d^{N}\times g$ matrix for $N$ sites, size of the physical index $d$, and number of excitations $g$, giving a complex coefficient, $c_{\sigma_1\sigma_2\ldots\sigma_{N}}^\xi$ where $\sigma_i$ is defined as in the main text to index the physical degrees of freedom on a site $i$ and $\xi$ indexes the excitations.

The task to separate the indices follows the same procedure as the derivation of the MPS from a single wavefunction \cite{bakerCJP21,*baker2019m}. The first step is to reshape the matrix to isolate $\sigma_1$ from the rest of the indices and perform a singular value decomposition.
\begin{equation}
|\Psi\rangle=\sum_{\substack{\sigma_1\ldots\sigma_{N} \\ a_1\xi}}A^{\sigma_1}_{a_1}c^{\xi\sigma_2\ldots\sigma_{N}}_{a_1}|\sigma_1\ldots\sigma_{N}\rangle|\xi\rangle
\end{equation}
Continuing with more SVDs and reshapes, we discover more terms in the bundled MPS until recovering Eq.~\eqref{ensembleMPSdef}. 

The importance of this exercise is to discover the size of the bond dimension at the ends of the bundled MPS.  Recall that for the single MPS that the maximum size of the bond dimension in the exact case (without truncation) scales as
\begin{equation}\label{maxm}
m_{\max}=\min\left(\prod_{x=1}^{i}d_x,\prod_{x=i+1}^{N_s}d_x\right)
\end{equation}
for bond $i$ and physical index of size $d_x$ on bond $x$. This is because the SVD is chosen with a convention that the size of the $D$ matrix is the minimum of the two dimensions of the input matrix \cite{dmrjulia1}. 

A similar expression can be derived for the bundled MPS. The only difference is that when gauged to a site $j$, the excitation index can be thought of as multiplying the bond dimension for the purposes of this size counting argument. Thus, redefining Eq.~\eqref{maxm} with $d_j\rightarrow g\cdot d_j$ will be sufficient.

The main point of this exercise is to show that when the orthogonality center is gauged to the first or last site in the system, one half of the system has the same size bond dimensions as the single MPS. The other half of the system has a larger bond dimension dependent on $g$.

This procedure is unfeasible beyond small systems since the exact wavefunctions can not be determined efficiently by exact diagonalization from the exponential size of the Hamiltonian operator with increasing number of sites \cite{bakerCJP21,*baker2019m}.

\begin{figure}[t]
\includegraphics[width=\columnwidth]{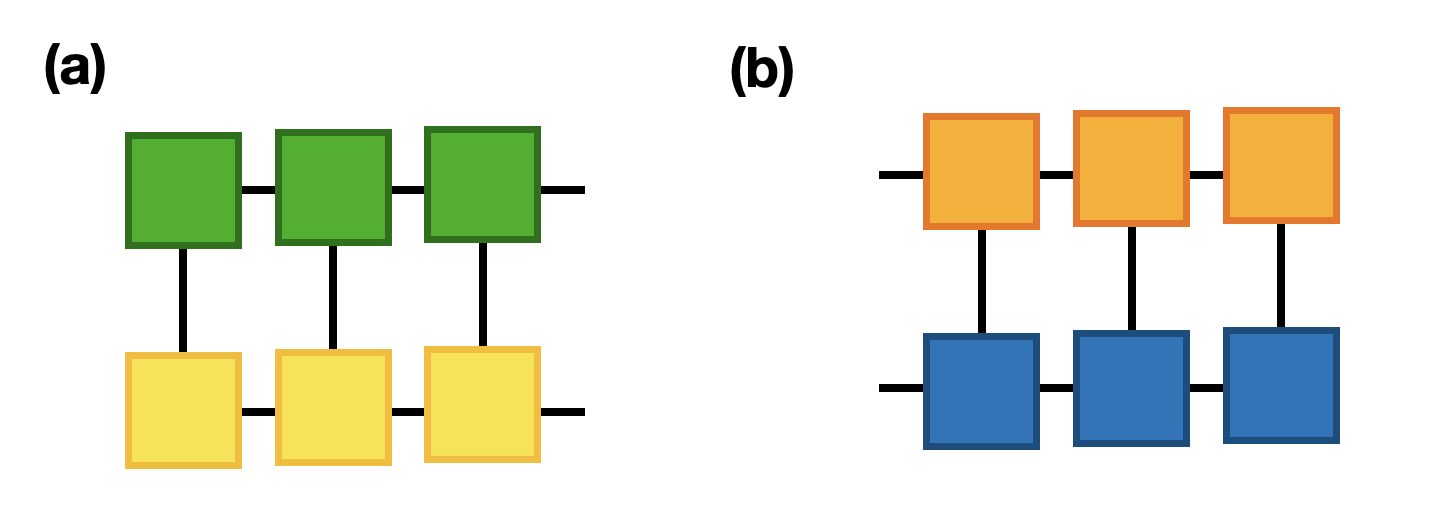}
\caption{The partial density matrices formed from the (a) left- and (b) right-normalized tensors of the MPS for a state $\psi$ (lower tensors) and an excitation $\phi$ (upper tensors). Connected lines signify a contraction over the relevant indices \cite{bakerCJP21,*baker2019m}. The $r$-body reduced density matrix is found by contracting the open link indices (horizontal) in the system and not contract $r$ of the physical indices (vertical). \label{partialdensmat}}
\end{figure}

\subsection{Overlap matrices of system partitions}\label{partialdensmats}

The overlap matrices representing the left and right partitions of the system can be constructed from the left- and right-normalized tensors of the MPS.

\begin{definition}[Overlap matrices of partial density matrices]\label{partialdens_def}
The wavefunction $\psi$ may be decomposed with an SVD as (with the matrix $D$ given on a bond, not a site)
\begin{equation}\label{wavefctSVD}
|\psi\rangle=U^{\sigma_1\sigma_2\ldots\sigma_i}_{a_{i-1}}D_{a_{i-1}a_i}V^{\sigma_{i+1}\ldots\sigma_N}_{a_i}|\sigma_1\ldots\sigma_N\rangle
\end{equation}
The contraction of the wavefunction onto another state $\phi$ is given by
\begin{equation}
\langle\phi|\psi\rangle=\sum_{a_{i-1},a_i,a'_{i-1},a'_i}\rho^{(L)}_{a_{i-1}a_{i-1}'}D_{a_{i-1}a_i}\tilde D_{a_{i-1}'a_i'}\rho^{(R)}_{a_ia_i'}
\end{equation}
and with
\begin{align}
\rho^{(L)}_{a_{i-1}a_{i-1}'}=\sum_{\{\sigma_i\}}\tilde U^{\sigma_1\sigma_2\ldots\sigma_i}_{a_{i-1}'}U^{\sigma_1\sigma_2\ldots\sigma_i}_{a_{i-1}}\\
\rho^{(R)}_{a_ia_i'}=\sum_{\{\sigma_i\}}V^{\sigma_{i+1}\ldots\sigma_N}_{a_i}\tilde V^{\sigma_{i+1}\ldots\sigma_N}_{a_i'}
\end{align}
with both of those terms represented in Fig.~\ref{partialdensmat}. The matrices $D$ can be truncated just as the density matrix was to $m\leq M$ just as in a truncated SVD. This will define a {\it truncated partial density matrix}.
\end{definition}

The following minor statement is presented as a lemma because it directly follows from the additional definition of the partial density matrix and Thm.~\ref{metric_diffE}.

\begin{lemma}[Commonality of singular vectors for different excitations]\label{SVDs_overlap}
A set of relevant singular vectors for a partial density matrix of an excitation can differ at most from the most relevant singular vectors of another excitation related to their energy difference similar to Thm.~\ref{metric_diffE}'s statement for natural orbitals.
\end{lemma}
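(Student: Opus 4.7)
The plan is to reduce the statement to a direct application of Theorem~\ref{metric_diffE} by recognizing that the partial density matrices $\rho^{(L)}$ and $\rho^{(R)}$ of Def.~\ref{partialdens_def} are themselves legitimate density matrices (hermitian, positive semi-definite) whose eigenvectors are precisely the left and right singular vectors $U$ and $V$ of Eq.~\eqref{wavefctSVD}, with eigenvalues given by the squared singular values contained in $D^2$. The singular vectors therefore play the role of natural orbitals for $\rho^{(L)}$ and $\rho^{(R)}$ in the entanglement basis indexed by the link index $a_i$, and the discarded squared singular values under a truncation to bond dimension $m$ play the role of the truncation error $\delta^{(\alpha)}_{m;\gamma}$ from Def.~\ref{truncationerror}.

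First I would decompose a local Hamiltonian across the bipartition as $H = H_L + H_R + H_\partial$, where $H_L$ and $H_R$ act only on the left or right partition and $H_\partial$ has bounded support across the cut by Def.~\ref{localH}. The energy difference then reads $\Delta E_{\alpha\beta} = \mathrm{Tr}((\rho^{(L,\alpha)}-\rho^{(L,\beta)})H_L) + \mathrm{Tr}((\rho^{(R,\alpha)}-\rho^{(R,\beta)})H_R) + \mathrm{Tr}((\rho^{(\alpha)}-\rho^{(\beta)})H_\partial)$. Repeating the construction from Eq.~\eqref{deltaEinbasis} through Eq.~\eqref{deltaEmetric} with singular vectors replacing $\{\Phi_k^{(\alpha)}\}$ and squared singular values replacing $\varepsilon_k^{(\alpha)}$, the bulk contribution collapses in the ultra-local limit to the same structure as Eq.~\eqref{deltaEmetric}. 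Invoking Theorem~\ref{truncatedmetricdistance} together with the Frobenius-norm bound derived around Eq.~\eqref{deltaEineq} then yields the direct analogue of Theorem~\ref{metric_diffE}: the $m$ most relevant singular vectors of $\rho^{(L,\alpha)}$ either coincide approximately with those of $\rho^{(L,\beta)}$ through a similar-excitations unitary of the block form $W \oplus P$ as in Def.~\ref{similarexc}, or they differ by an amount controlled by $|\Delta E_{\alpha\beta}|$.

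The main obstacle will be handling the boundary term $H_\partial$ cleanly, since it couples the two partitions and is not diagonal in the singular-vector basis of either partial density matrix. By Def.~\ref{localH} its support is finite and independent of the system size, so its contribution to $\Delta E_{\alpha\beta}$ is uniformly bounded and can be absorbed into the effective coupling constants $C_{i\mathcal{A}}$ appearing in the ultra-local bound of Eq.~\eqref{deltaEmetric}, preserving the inequality that links the energy difference to the Frobenius distance between the truncated partial density matrices and closing the reduction to Theorem~\ref{metric_diffE}.
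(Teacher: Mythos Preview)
Your proposal is correct and follows the same core idea as the paper: recognize that the partial density matrices $\rho^{(L)}=UD^2U^\dagger$ and $\rho^{(R)}=VD^2V^\dagger$ are bona fide density matrices whose ``natural orbitals'' are the singular vectors and whose occupational weights are the squared singular values, and then re-run the argument of Theorem~\ref{metric_diffE}. That is exactly what the paper does.

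The route you take, however, differs in one respect. You split the Hamiltonian as $H=H_L+H_R+H_\partial$ and treat the boundary piece separately, arguing that its bounded support lets you absorb it into the constants $C_{i\mathcal{A}}$. The paper bypasses this decomposition entirely: it fixes a common entanglement basis $\gamma$ in which \emph{both} excitations share the same singular vectors $U_\gamma$ (and $V_\gamma$), so that the two partial density matrices differ only in their diagonal $D^2$ matrices. In that basis the paper writes $\Delta E^{(m)}_{\alpha\beta;\gamma}$ directly as a single partition sum involving $D_{\alpha;\gamma}^2-D_{\beta;\gamma}^2$, as in Eq.~\eqref{energydiff_partial}, and the ultra-local limit is then \emph{exact on the density-matrix side} by construction rather than an approximation induced by orbital locality. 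Your decomposition is more explicit about what happens at the cut and is arguably cleaner bookkeeping; the paper's route instead exploits the freedom to work in a shared SVD basis, which makes the reduction to Theorem~\ref{metric_diffE} more immediate and avoids introducing $H_\partial$ at all. Both arrive at the same bound linking the energy difference to the Frobenius distance of the truncated partial density matrices.
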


\begin{proof}
The partial density matrices are presented graphically in Fig.~\ref{partialdensmat} between the excitations.  Note that immediately, one can apply the results from Thm.~\ref{metric_diffE} onto the partial density matrices.  For those states on the left of the orthogonality center, the density matrix is written in the states for the left and right basis functions
\begin{equation}
\rho=\begin{cases}
 U  D^2 U^\dagger&(\mathrm{left})\\
 V  D^2 V^\dagger&(\mathrm{right})
\end{cases}
\end{equation}
and the indices on each tensor are skipped here for the sake of brevity.

In order to use Eq.~\eqref{energydiff} with the density matrix, the lattice sites on the whole system must be partitioned into a left (L) and right (R) group ({\it i.e.}, either $({i\mathcal{A}})_L$ or $({i\mathcal{A}})_R$). The sum can then be separated into each group, and the density matrix must be represented in a basis natural to each group ({\it i.e.} with the basis functions for the left or right as in Def.~\ref{partialdens_def}.
\begin{align}\label{energydiff_partial}
\Delta E^{(m)}_{\alpha\beta;\gamma}&=\sum_{j\in\{({i\mathcal{A}})_L\}}C_j\langle j|U_\gamma\left(D_{\alpha;\gamma}^2- D_{\beta;\gamma}^2 \right) U_\gamma^\dagger|j\rangle\\
&=\sum_{j\in\{({i\mathcal{A}})_R\}}C_j\langle j|V_\gamma\left(D_{\alpha;\gamma}^2 - D_{\beta;\gamma}^2\right)V_\gamma^\dagger|j\rangle\nonumber
\end{align}
where the subscript denotes which excitation the component of the SVD belongs to for some chosen basis $\gamma$ of which $m$ states were kept. Just as in Eq.~\eqref{energydiff}, the Hamiltonian must be local for this expression to be valid. In the tensor network formalism, this corresponds to the MPO containing only local terms. This is because the states $k$ from Eq.~\eqref{energydiff_partial} can also index the basis vectors of the SVD used to compose $U$, $D$, and $V$. This proof applies to any bond in the MPS, so there is no issue of how the MPS is gauged.
\end{proof}

Note that the $D$ matrices for either the $\alpha$ or $\beta$ excitation are guaranteed to be diagonal by construction and are written in the same basis of entangled states. This is exactly the ultra-local limit from Def.~\ref{ultralocalH} that was used in Thm.~\ref{metric_diffE}. In this case, the orbitals $\{\Phi^{(\alpha)}_k\}$ and $\{\Phi_k^{(\beta)}\}$ are the same basis. The degree of freedom that allows one to write a higher excitation is the changing of the magnitude of the occupancies $\varepsilon_k$ for either state. Thus, the general prescription defined for the general case of natural orbitals in the quantum chemistry context is precisely the relevant case for the tensor network.

For smaller energy differences, the most relevant singular vectors will be more common when the states are similar. This means that the primary results from Thm.~\ref{metric_diffE} can be equally applied to the singular vectors here implying a maximum bound on the number of differences in $D$. The singular vectors are also orthogonal in this case, so this is different from Thm.~\ref{metric_diffE} in that the overlap of the singular vectors is not needed.

The reduction here applies equally well for pure or mixed states, although in a given computation the states computed are a mixed state representation of a given eigenvalue. Combined with the previous results, the main statement here applies to all local Hamiltonians.

We are then ready to state a central theorem to the paper.
\begin{theorem}[Bundled tensor networks]
A bundle of states in a tensor network containing an energy interval $[E_\alpha,E_\beta]$ ($\Delta E_{\alpha\beta}\approx0$) and with $\delta^{(\alpha)}\ll1$ and $\delta^{(\beta)}\ll1$ will not require a larger bond dimension $m$ to add a similar state $\delta^{(\gamma)}\ll1$ with $\Delta E_{\alpha\gamma}\approx0$. It will cost at most $m'$ more states (the number required to describe $\gamma$ with $\delta^{(\gamma)}$ of some magnitude) to add a dissimilar state of any energy difference.
\end{theorem}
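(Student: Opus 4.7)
The plan is to lift the partial-density-matrix analysis of Lemma~\ref{SVDs_overlap} and the energy-metric bound of Thm.~\ref{metric_diffE} directly into statements about the bond dimension of the bundled MPS defined in Eq.~\eqref{ensembleMPSdef}. First I would fix the orthogonality center at an arbitrary bond and treat the left- and right-normalized tensors there as fixing a common orthonormal basis of entangled states in which each $D^{(\xi)}$ is diagonal (the ultra-local situation emphasized after Lemma~\ref{SVDs_overlap}). The bond dimension at the chosen cut is then simply the dimension of the span of the union of the retained singular directions over all $\xi$ in the bundle, so the problem reduces to counting how many new singular directions a new excitation contributes.

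For the first claim, where $\gamma$ is similar to $\alpha$ and $\Delta E_{\alpha\gamma}\approx 0$, I would invoke Corollary~\ref{manystates} to conclude $\delta^{(\gamma)}_{m;\alpha}\cong\delta^{(\alpha)}_{m;\alpha}\ll 1$, so the $m$ most relevant singular vectors already retained for $\alpha$ also give a small truncation error for $\gamma$. Thm.~\ref{UisIdentity} then forces the unitary relating the singular-vector bases of $\gamma$ and $\alpha$ to decompose approximately as $W\oplus P$ in the sense of Def.~\ref{similarexc}, with $W$ acting only within the retained $m$-dimensional subspace. Since an internal rotation of a subspace does not change its dimension, the span of retained directions is unchanged and the bond dimension is not enlarged by appending $\gamma$ to the index $\xi$.

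For the second claim, where $\gamma$ is dissimilar, I would let $m'$ be the smallest number of singular vectors needed to bring $\delta^{(\gamma)}_{m';\gamma}$ to the desired precision. Because the relation between the two sets of singular vectors is no longer of the block-diagonal form $W\oplus P$, these $m'$ vectors are not generally contained in the existing $m$-dimensional span. Appending them yields a combined subspace of dimension at most $m+m'$, so the bond dimension after adding $\gamma$ is bounded by $m+m'$, with equality only in the fully dissimilar limit. Once $\gamma$ has been absorbed into the bundle in this way, I would reapply the first claim to any further similar states clustered near $E_\gamma$, so that each dissimilar state seeds a new family that can be bundled at no further cost, recovering the layered picture sketched at the end of the preceding section.

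The main obstacle I expect is the one already half-resolved by Lemma~\ref{SVDs_overlap}: promoting the similar/dissimilar dichotomy from natural orbitals of the full density matrix to singular vectors of the partial density matrix, and doing so on every bond of the MPS simultaneously. The ultra-local structure of the $D$ matrices, which are diagonal in a single shared basis by construction of the SVD, is what makes this promotion essentially automatic and lets the dimension-counting argument above replace what would otherwise be a delicate perturbative estimate. A secondary subtlety is that the bound $m+m'$ is loose whenever the new excitation shares nontrivial support with the existing span; I would simply note that the theorem states \emph{at most} $m'$ additional states, and that this is tight precisely in the symmetry-protected or widely separated cases flagged after Def.~\ref{similarexc}.
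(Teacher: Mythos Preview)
Your proposal is correct and follows essentially the same route as the paper: the paper's own justification is a single sentence stating that the theorem ``immediately follows from the results in Sec.~\ref{partialdensmats}'' (i.e., Lemma~\ref{SVDs_overlap} and the ultra-local discussion after it), and your argument is precisely a fleshed-out version of that claim, invoking Thm.~\ref{UisIdentity}, Def.~\ref{similarexc}, and Corollary~\ref{manystates} to turn the dimension-counting into an explicit statement about the span of retained singular directions. If anything, you have supplied the details the paper omits.
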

This immediately follows from the results in Sec.~\ref{partialdensmats} since the energy of the new state on the interval will have a small energy difference with the other states on the compact interval. The summary statement is that the bundled MPS is not necessarily more expensive to solve for if a suitable initial state is solved and then low energy excitations are found.

\subsection{Discussion of states with different symmetries}

How many excitations are similar is highly model dependent. While it is true that the next excitation does not need to be similar, it is true that the set of all states on a compact interval contains similar states in most models. So, there should be an expected reduction in the size of the bond dimension.
 
 Consider two classical MPSs ($m=1$), one with a state with 1 fermion and another with 10 fermions. The overlap between these two states will definitely zero because of the quantum number symmetries. As the bond dimension of the MPS is grown (as in a DMRG computation) and each tensor contains more blocks each with different symmetries, it is not guaranteed that the symmetry sectors will have much overlap upon contraction of the full network. From this argument, it can be reasoned that based on symmetries alone that the singular values will not have much overlap between the two excitations. This is even true for a state with 1 fermion and another with 2 fermions with all fermions on the first site and if the net fluxes of the quantum number symmetries are assigned on the last site.

Conservatively, we can state that the addition of a new state into the bundled MPS that does not have a symmetry represented will require another $m$ states to be added onto the link index. So, the scaling appears as $O(mN_\mathrm{sym.})$; however, this a generous upper bound because of similarities between some symmetry sectors.

Take for example the ensemble containing 1 spin-up fermion and another excitation for 1 fermion with a spin-down fermion for a spin-symmetric Hamiltonian ({\it i.e.}, no magnetic field or other effect applied). These two symmetry sectors in the Hamiltonian are identical blocks by definition. The wavefunctions, however, while not having the same symmetry have many elements in the Hamiltonian in common and have a conjugate set of singular vectors in the other.

Another example would be for a chain of entirely up spins and another state of entirely down spins. The quantum number symmetries on each link index show that the spaces in which the tensors exist is completely orthogonal to each other in general.

\begin{figure*}
\begin{center}
\includegraphics[width=0.66\columnwidth]{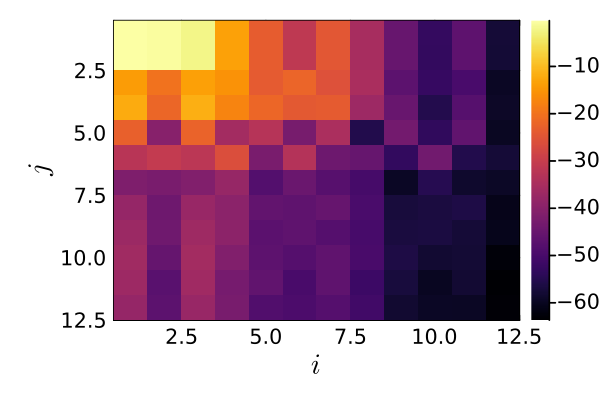}
 \put (-35,25) {\textcolor{white}{a)}}
\includegraphics[width=0.66\columnwidth]{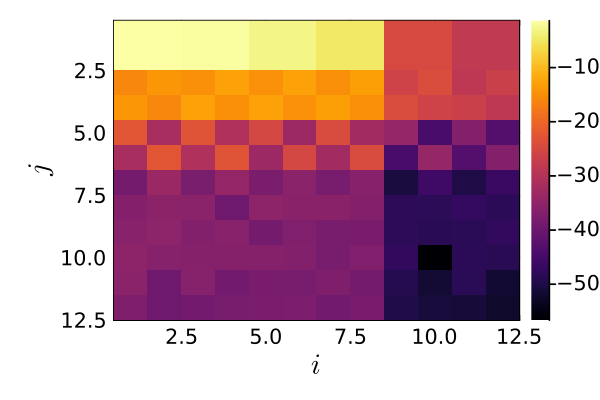}
 \put (-35,25) {\textcolor{white}{b)}}
\includegraphics[width=0.66\columnwidth]{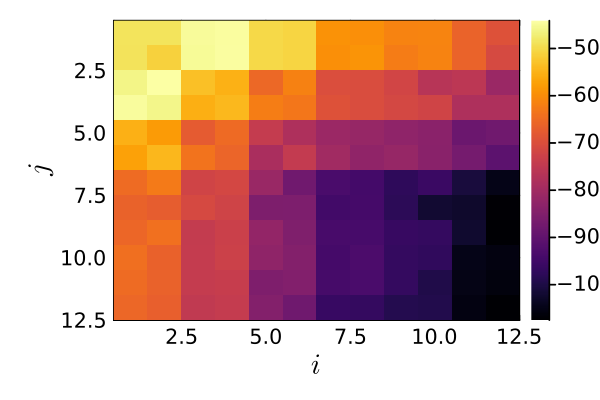}
 \put (-35,25) {\textcolor{white}{c)}}\\
\includegraphics[width=0.66\columnwidth]{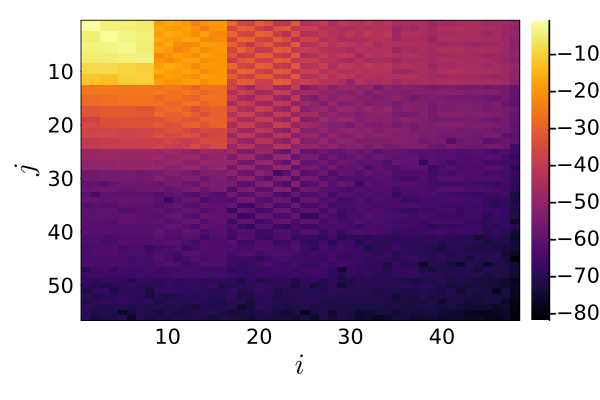}
 \put (-35,25) {\textcolor{white}{d)}}
\includegraphics[width=0.66\columnwidth]{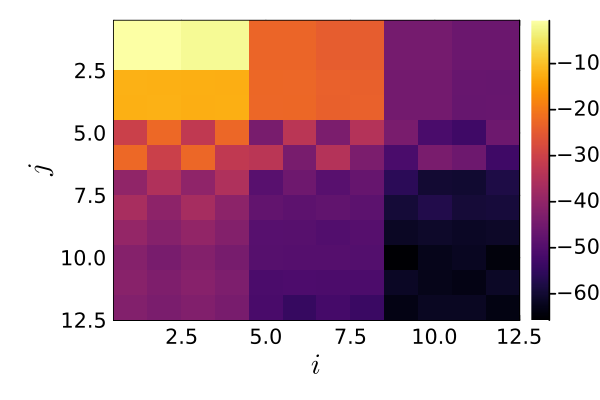}
 \put (-35,25) {\textcolor{white}{e)}}
\end{center}
\caption{\label{naturalorbitalsplot}
Plots of the logarithm of the absolute value of the weights overlap matrix for a 12-site Ising model with $h_x=0.01$ and examining the middle bond of the system. Only non-zero singular values were kept in the SVD but is untruncated otherwise. Brighter colors represent higher values and therefore more significant states for the ground-state. a) A bundle formed of a bundle with states 1 and 2 compared with a bundle of states 1 and 3. Depending on the cutoff tolerated, we can truncate many of the states. b) weighted overlap matrix of a bundle of states 1 and 2 with a bundle of states 29 and 30. More states have higher values because of the increase in norm between the density matrices. Some of the orbitals can still be truncated (darker regions) depending on the allowed cutoff tolerance. c) the overlap matrix of states 1 and 2 with states $M-1$ and $M=2^{12}$. More of the states have a high weight making ti harder to find states to truncate. d) states 28 and 29 bundled together and observed in the overlap matrix with a bundle of 29-30 which have a small energy difference. Many states can again be truncated over many orbitals. e) Two bundles with states $M-1$ and $M$ bundled with states $M-2$ and $M$ show relatively few states of high weight.}
\end{figure*}

\begin{figure*}
\begin{center}
\includegraphics[width=0.66\columnwidth]{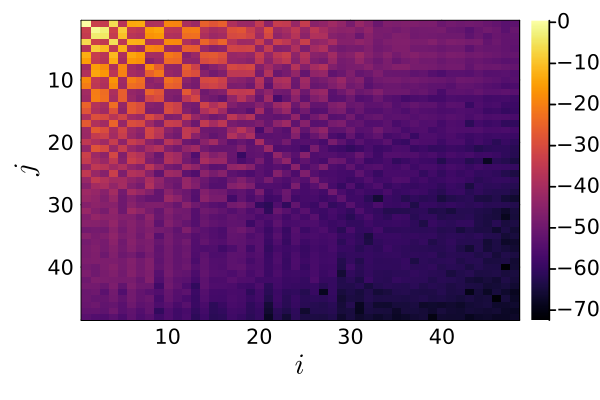}
 \put (-35,25) {\textcolor{white}{a)}}
\includegraphics[width=0.66\columnwidth]{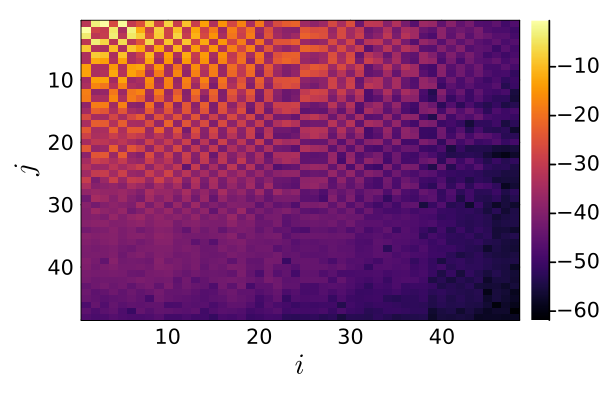}
 \put (-35,25) {\textcolor{white}{b)}}
\includegraphics[width=0.66\columnwidth]{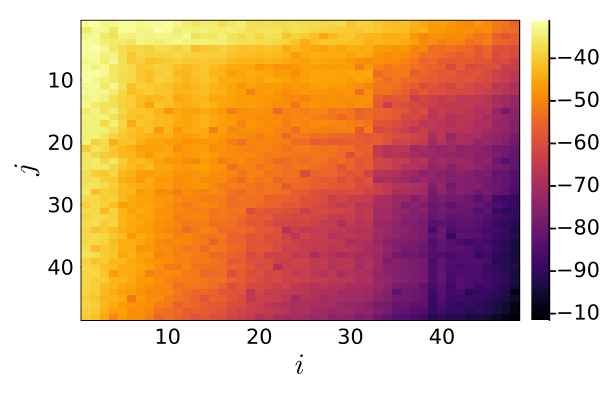}
 \put (-35,25) {\textcolor{white}{c)}}\\
\includegraphics[width=0.66\columnwidth]{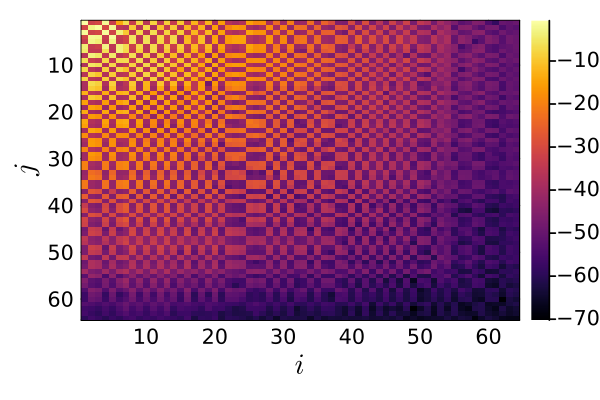}
 \put (-35,25) {\textcolor{white}{d)}}
\includegraphics[width=0.66\columnwidth]{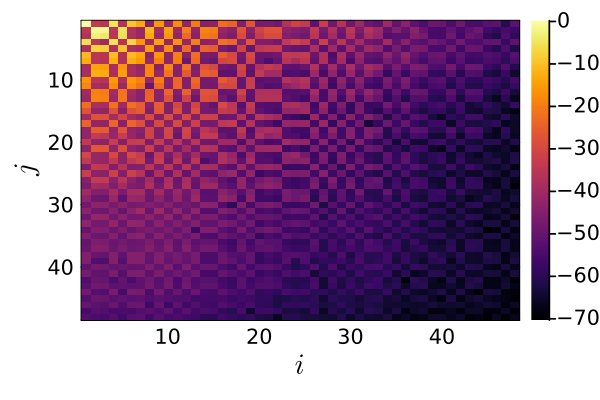}
 \put (-35,25) {\textcolor{white}{e)}}
\end{center}
\caption{\label{naturalorbitalsplot_critical}
The same plots as Fig.~\ref{naturalorbitalsplot} but for an Ising model with $h_x=1$ (critical Ising model). The larger bond dimension is due to the critical nature of the model. More states are necessary because of the long-range correlations in the system, but the qualitative features of the graphs remain the same. Larger energy differences correspond to more states of high weight in this measure and therefore a larger bond dimension is required to keep the truncation error low.}
\end{figure*}

\section{Examples}

To illustrate some of the concepts presented in the previous sections, we focus on three models and determine some relevant quantities. All models are solved with the DMRjulia library.

\subsection{Transverse field Ising model}

We consider the case of the transverse field Ising model, defined as
\begin{equation}
H=\sum_i\sigma_i^z\cdot\sigma_{i+1}^z+h_x\sigma^x_i
\end{equation}
where
\begin{equation}
\sigma^x=\left(\begin{array}{cc}
0 & 1\\
1 & 0
\end{array}\right)\quad\mathrm{and}\quad\sigma^z=\left(\begin{array}{cc}
1 & 0\\
0 & -1
\end{array}\right)
\end{equation}
with subscripts indicating the matrix belongs to a Pauli string of Kronecker products $O_i=I\otimes I\otimes\ldots\otimes O\otimes\ldots\otimes I$ with $O$ in the $i$th position. The model exhibits a phase transitions when $h_x=1$, which creates a gapless spectrum \cite{sachdev2007quantum}. We note that we focus on the properties of the bundled MPS in this paper and not the algorithm used to solve it, which in the most general case should be considered to be undecideable--in the sense of the halting problem--to find the spectral gap of the Hamiltonian \cite{cubitt2015undecidability}, meaning that finding the right method to solve the bundled MPS is an open challenge for arbitrary systems (including three-dimensional models).

One way to visualize the relevant states in the system for different energies is to form the overlap matrix, $\Gamma$ from Fig.~\ref{partialdensmat} and multiply the $D$ matrices contracted onto the link index, giving
\begin{equation}
\Gamma_{ij} = \Gamma_{a_i'a_i}=\sum_{a_{i-1},a_i,a'_{i-1},a'_i}\rho^{(L)}_{a_{i-1}a_{i-1}'}D_{a_{i-1}a_i}\tilde D_{a_{i-1}'a_i'}
\end{equation}
This ensures that no only will we be able to see the overlap between two of the basis states used in two different bundles, but we will also see how relevant each are for the overall answer. All plots are shown with the logarithm of the absolute value of each element of the overlap matrix $\Gamma$. Matrices are truncated so that no singular values of zero are represented. 

Brighter colours indicate that the matrix element is important and should not be truncated. Darker colours indicate that the matrix elements is small. Truncating an entire row or column is possible in the normal evaluation of the reduced bond dimension of a system. Thus, we look in the figures for a row or column where (nearly) all of the elements are a darker colour. This means that the row or column can be removed without losing precision in the bundled MPS.

The results in Fig.~\ref{naturalorbitalsplot} tell the same story as the theorems presented previously. The larger the energy difference between states, the more states in this measure have a higher weight (lighter color). 

The general trend is that the larger the energy difference, the fewer rows and columns that can be truncated.

Results for the critical model in Fig.~\ref{naturalorbitalsplot_critical} are largely the same but the bond dimension is larger because of the gapless eigenspectrum of the model.

\subsection{Heisenberg model}

We can perform the same analysis with an XXZ model ($\Delta=1$) of the form
\begin{equation}
H=\sum_i\mathbf{S}_i\cdot\mathbf{S}_{i+1}
\end{equation}
where $\mathbf{S}=\langle S^x,S^y,S^z\rangle=\frac12\langle \sigma^x,\sigma^y,\sigma^z\rangle$ and the commutator $[\sigma^x,\sigma^z]=-2i\sigma^y$ \cite{townsend2000modern}.

\begin{figure*}
\begin{center}
\includegraphics[width=\columnwidth]{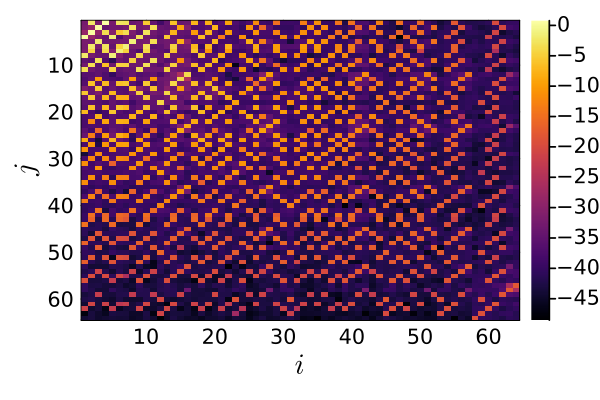}
 \put (-50,40) {\textcolor{white}{a)}}
\includegraphics[width=\columnwidth]{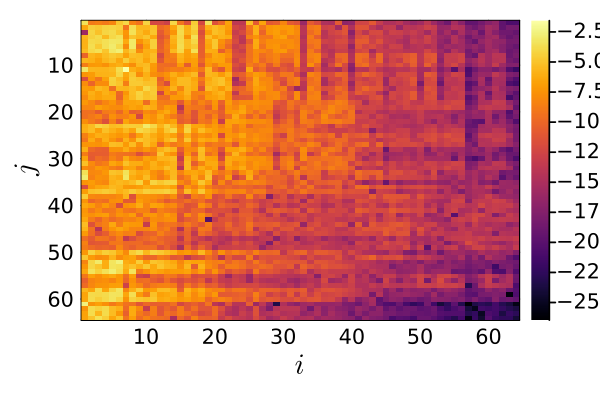}
 \put (-50,40) {\textcolor{white}{b)}}\\
\includegraphics[width=\columnwidth]{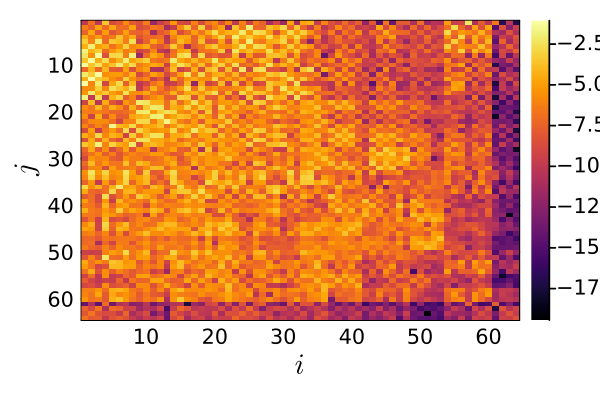}
 \put (-50,40) {\textcolor{white}{c)}}
\includegraphics[width=\columnwidth]{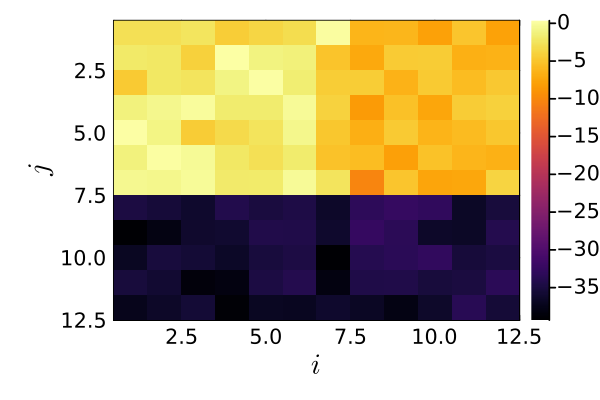}
 \put (-50,40) {\textcolor{white}{d)}}
\end{center}
\caption{\label{naturalorbitalsplot_XXZ}
A 12-site Heisenberg model ($M=2^{12}$). This time, we bundle together groups of 10 states into a bundle. a) The eigenstates 1-10 and 11-20 with the computed weighted overlap matrix. Darker rows and columns indicate that the plot can be truncated up to some defined tolerance. b) The first ten (1-10) excitations computed in the weighted overlap matrix with last $M-10$ to $M$ excitations. There are far fewer states which can be truncated in this case. Note that the left side of the plot is all high values. c) states $M/2-(M/2+10)$ and $(M/2+11)-(M/2+20)$. More states can be trucnated than for the largest energy difference shown (b) but more high-weight values appear because it is more costly to solve for the lowest energy in this case. d) For the states $(M-10)-M$ and $(M-20)-(M-11)$ show that the number of states is small again. Note that many states were truncated in comparison to the other three figures because there were many weight zero singular values that were truncated.}
\end{figure*}

Figure~\ref{naturalorbitalsplot_XXZ} shows the weighted overlap matrix. This time, groups of 10 eigenstates are shown. The Heisenberg model has a well-known symmetry structure that creates a blockier appearance for the graph. However, the main message of the analysis remains the same: larger energy differences have more high-weight states. Bundling together energy states that are close together have many states in common and thus the bond dimension can be truncated.

\section{Conclusion}

We introduced the bundled density matrix, a set of density matrices that are independent but written in a common basis. We showed that the difference in the truncation errors constitute a practical metric to determine a notion of distance between the density matrices. It was shown that for local system, the energy difference in that basis was also a practical metric to describe a notion of distance between density matrices in the bundle.

The larger the energy difference between the density matrices, the larger the difference in the density matrices are. One might expect that as one bundles excitations that are further into the bulk that the volume law entanglement would dominate and therefore drive the bond dimension higher. What these results suggest, in effect, is that this is only guaranteed for large energy differences. For small energy differences, the density matrices can either be similar, where the $m$ most relevant states constitute a good basis for low-energy excitations. Or, the states can be dissimilar, such as in the case of a different symmetry sector, where the truncation error is large.

We extended these ideas to the bundled matrix product state where the results demonstrate that the bond dimension is lower for similar states. Bundled matrix product states therefore may support low energy excitations in the same symmetry sector without a costly increase in the bond dimension. To add a dissimilar state, one must pay the cost of the matrix product state representation of that state. Large energy differences in bundled matrix product states have effectively no degrees of freedom overlapping and are not very different from two separate MPSs.

Definitions for similar states and the exact applicability of the local limit used in this paper can certainly be expanded and modified, and we encourage the community to work with the concept of the bundled density matrix in other contexts including quantum information.

\section{Acknowledgements}

This research was undertaken, in part, thanks to funding from the Canada Research Chairs Program.

The Chair position in the area of Quantum Computing for Modelling of Molecules and Materials is hosted by the Departments of Physics \& Astronomy and of Chemistry at the University of Victoria.

N.S.~acknowledges the NSERC CREATE in Quantum Computing Program (Grant Number 543245).

This work is supported by a start-up grant from the Faculty of Science at the University of Victoria. 

This work has been supported in part by the Natural Sciences and Engineering Research Council of Canada (NSERC) under grants RGPIN-2023-05510 and DGECR-2023-00026.

\begin{appendix}

\section{Two equivalent forms to determine the density matrix}\label{densitymatrixequivalence}

The definition of the one-body reduced density matrix is 
\begin{align}
\rho(\mathbf{x},\mathbf{x}')=&\int\ldots\int\psi^*(\mathbf{x},\mathbf{r}_2,\ldots,\mathbf{r}_{N_e})\\
&\quad\times\psi(\mathbf{x}',\mathbf{r}_2,\ldots,\mathbf{r}_{N_e})d\mathbf{r}_2\ldots d\mathbf{r}_{N_e}\nonumber
\end{align}
which works well for applications in quantum chemistry. In the graphical notation used in the main section of this paper, the vertical physical indices would have to be kept track of as separate but all other physical indices would be contracted. 

The case of the tensor network has an immediately large computational cost with using this form. The tensor network form would require that we isolate physical indices and perform a contraction that winds up being exponential for large $|i-j|$. Neither of these is desirable for the tensor network because it was specifically formulated not incur the exponential cost of the full quantum problem.

Another definition of the density matrix with elements
\begin{equation}
\rho_{ij}=\langle\psi| c^\dagger_{i\sigma} c_{j\sigma}|\psi\rangle
\end{equation}
for fermions \cite{dmrjulia1} is often cited. This form is equivalent to the above form as we now demonstrate.

Consider the definition of the wavefunction as a superposition in some position. We denote the local degrees of freedom on each cite $i$ by $\sigma_i$. The general decomposition of the wavefunction becomes
\begin{equation}\label{psiform}
|\psi\rangle=\sum_{\sigma_1\dots\sigma_N}w_{\sigma_1\dots\sigma_N}|{\sigma_1\dots\sigma_N}\rangle
\end{equation}
for $N$ sites. The application of the operator $ c_j$ gives
\begin{equation}\label{ketform}
 c_j|\psi\rangle=\sum_{\sigma_1\dots\sigma_N}w_{\sigma_1\dots\sigma_N}|{\sigma_1\dots\bar\sigma_j\ldots\sigma_N}\rangle
\end{equation}
where $\bar\sigma_j$ is the modified value of $\sigma_j$ upon evaluation of the operator. In the case of a spin-half system, a value of $\uparrow$ would go to $\downarrow$.

A similar form can be derived for $ c^\dagger_i$ on the dual vectors
\begin{equation}
\langle\psi| c^\dagger_i=\sum_{\sigma_1\dots\sigma_N}\langle{\sigma_1\dots\bar\sigma_i\ldots\sigma_N}|w_{\sigma_1\dots\sigma_N}^*
\end{equation}
Neither this sum nor the dual expression in Eq.~\eqref{ketform} have the same number of terms as in Eq.~\eqref{psiform} because the annihilation operator creates some zero terms from the original expression. The final expectation value becomes
\begin{equation}\label{latticeform}
\langle\psi| c^\dagger_i c_j|\psi\rangle=\sum_{\sigma_i\sigma_j}\langle\bar\sigma_i\sigma_j|W_{\sigma_i\sigma_j}|\sigma_i\bar\sigma_j\rangle.
\end{equation}
where
\begin{equation}
W_{\sigma_i\sigma_j}=\sum_{\{\sigma_1\dots\sigma_N\}\backslash\{\sigma_i\sigma_j\}}w^*_{\sigma_1\ldots\sigma_N}w_{\sigma_1\ldots\sigma_N}
\end{equation}
where the summation is over all variables but not $\sigma_i$ or $\sigma_j$.
Returning now to Eq.~\eqref{psiform}, it can be seen that the replacement of the $\mathbf{x}$ and $\mathbf{x}'$ terms would be represented with the ket form as
\begin{align}\label{truedensmat}
\rho(\mathbf{x},\mathbf{x}')\Rightarrow&\rho(\sigma_\mathbf{x},\sigma_{\mathbf{x}'})\overset{\mathrm{discretize}}=\rho(\sigma_i,\sigma_j)\\
=&\sum_{\{\sigma_1\dots\sigma_N\}\backslash\{\sigma_i\sigma_j\}}\langle\sigma_1\dots\sigma_N|W_{\sigma_i\sigma_j}|{\sigma_1\dots\sigma_N}\rangle\nonumber
\end{align}
and the primed index on $i$ or $j$ is a reinterpretation of the continuous, real-space position variable $\mathbf{x}$ into the lattice variable for some site on the discrete lattice space. Eq.~\eqref{truedensmat} then reduces to Eq.~\eqref{latticeform}. Thus, the two forms of the density matrix are the same.

\end{appendix}

\bibliography{TEB_papers,TEB_books,refs}

\end{document}